\newcommand{\ep}{\eta^{\uparrow}}
\newcommand{\R}{ {\mathbb{R}} }
\newcommand{\E}{ {\mathbb{E}} }
\newtheorem{theorem}{Theorem}\theoremstyle{plain}
\renewcommand{\and}{\quad\textrm{ and }\quad}
\newcommand{\cE}{\mathcal{E}}
\newcommand{\ccE}{\check{\cE}}
\newcommand{\cI}{\mathcal{I}} 
\newcommand{\ccI}{\check{\mathcal{I}}} 
\newcommand{\cN}{\mathcal{N}}
\newcommand{\ce}{\epsilon}
\newtheorem{assumption}{Assumption}\newtheorem{lemma}{Lemma}\newtheorem{proposition}{Proposition}\newtheorem{remark}{Remark}\numberwithin{equation}{section}
\numberwithin{equation}{section}
\numberwithin{lemma}{section}
\numberwithin{proposition}{section}
\numberwithin{theorem}{section}
\numberwithin{corollary}{section}
\numberwithin{definition}{section}
\numberwithin{remark}{section}
\title{Weak error rates of numerical schemes for rough volatility}
\author{Paul Gassiat}
\address{
Universit\'e Paris-Dauphine, PSL University, UMR 7534, CNRS, CEREMADE, 75016 Paris, France}
\email{gassiat@ceremade.dauphine.fr}
\thanks{This work is partially supported by the ANR via the project ANR-16-CE40-0020-01. The author is indebted to two anonymous referees whose comments helped to substantially improve the presentation.}
\begin{document}

\begin{abstract}
Simulation of rough volatility models involves discretization of stochastic integrals where the integrand is a function of a (correlated) fractional Brownian motion of Hurst index $H \in (0,1/2)$. We obtain results on the rate of convergence in the number of time-steps for the weak error of such approximations, in the special cases when either the integrand is the fBm itself, or the test function is cubic. Our result states that the convergence is of order $(3H+ \frac{1}{2}) \wedge 1$ for exact left-point discretization, and of order $H+\frac{1}{2}$ for the hybrid scheme with well-chosen weights. 
\end{abstract}

\thanks{ }

\maketitle


\section{Introduction}

The family of rough volatility models, where the volatility process has sample paths which are rougher than those of classical Brownian motion, has been the object of much interest in the mathematical finance community in the last few years, due to its ability to reproduce several features of asset prices, such as for instance the observed skew of implied volatility \cite{ALV07,fukasawa2011asymptotic}, its consistency with empirical time series \cite{GJR18} and the fact that it arises as scaling limit of microstructure models under natural conditions \cite{el2018microstructural, JR20}.

In their simplest forms, the volatility process $\sigma_t$ is a function of a (Riemann-Liouville) fractional Brownian motion (fBm), namely
\[ \sigma_t = f(t, \widehat{W}_t) , \;\;\; \widehat{W}_t = \int_0^t (t-s)^{H-1/2} dW_s,\]
and the corresponding stock-price dynamics are given by
\[ dS_t = \sigma_t (\rho dW_t + \sqrt{1-\rho^2} d\bar{W}_t). \]
Here $f$ is a deterministic function (the popular ''rough Bergomi'' model, introduced in \cite{BFG16}, corresponds to $f(t,x) = \zeta(t) \exp(\eta x)$), and $W, \bar{W}$ are independent Brownian motions.


By a classical conditioning argument (the so-called Romano-Touzi formula \cite{RT97}), when evaluating European options, the dependence in $\bar{W}$ in $S$ can be integrated out, so that call option prices in this model are given by
\[  \E \left[ C_{BS} \left(  S_{0}\exp \left( \rho \int_{0}^{T} f(t,\widehat{W}_t) dW_t-\frac{\rho ^{2}}{2} 
\int_{0}^{T} f(t,\widehat{W}_t)^2 dt\right) ,K,\frac{1-{\rho}^{2}}{2}\int_{0}^{T} f(t,\widehat{W}_t)^2 dt \right)\right]\]
where $C_{BS}=C_{BS}(S_0,K,\sigma^2 T)$ is the  usual Black-Scholes Call pricing function. Since, unlike the case of Markovian models, PDE methods are not available here, in order to evaluate option prices we are left with the issue of simulating the inner random variable, and, in particular, the stochastic integral
\[ \cI = \int_0^T f(t, \widehat{W}_t) dW_t.\]

A simple choice consists in left-point discretization of the above, namely to write
\[ \mathcal{I} \;\; \approx \;\; {\cI}^{',N} :=  \sum_{k=0}^{N-1} f(\widehat{W}_{t_k}) \left(W_{t_{k+1}} - W_{t_k} \right).\]
where $\{ t_k, k=0,\ldots, N\}$ is a discretization of $[0,T]$. Since the covariance of the Gaussian vector $\left( \widehat{W}_{t_k}, W_{t_k} \right)_{k=1,\ldots, N}$ is explicit, it can be simulated exactly by the classical Cholesky method.
 
It is then natural to ask what is the error made when considering this approximation. It is important here to distinguish between strong and weak error. The strong error corresponds to the size of the difference $\cI - \cI'$, and a simple computation based on It\^o isometry and properties of the fBm show that its $L^2$ norm is of order $N^{-H}$. Since in practical applications $H$ is small (of order $0.1$), this converges to $0$ very slowly which may lead to doubt the practical feasibility of Monte Carlo approximations for these models. However, the more relevant quantity in practice is the weak error, i.e. the quantity
\[ \cE_{\Phi} = \E[\Phi(\cI)] - \E[ \Phi(\cI^{',N})] \]
for a given (family of) test function(s) $\Phi$. 

It is well-known that these two errors do not in general share the same order of convergence (recall that in the case of classical SDEs these orders are respectively $\frac{1}{2}$ and $1$, see e.g. \cite{TT90}). This turns out to also be the case here, as proved first in \cite{BHT20}. They show that the rate of weak convergence is of order at least $H+\frac1 2$ when $f(x)=x$, and in fact they give a simple argument (which they attribute to Neuenkirch) showing that the rate is even of order $1$ when $\Phi$ is a quadratic. Note that these rates have the appealing feature of not going to $0$ as $H \to 0$.

The main result of this work (Theorem \ref{thm:main} below) is a further improvement on their result, showing that, when either $f(x)=x$ or $\Phi$ is a cubic polynomial, the weak error is in fact bounded by a higher power of $1/N$, namely
\[  \mathcal{E}_{\Phi} \leq C \left(\frac{1}{N}\right)^{(3H+1/2)\wedge 1 }. \]

Our proof is based on a direct manipulation of fractional integrals and an application of the integration by parts formula of Malliavin calculus, as first used in the context of numerical error study in \cite{CKL06}. We then prove our result by induction on the regularity of the test function (using crucially the rate $1$ for quadratics in the induction step). The method is arguably simpler than the PDE methods of \cite{BHT20} (based on Markovian approximation). We also believe that our proof could be refined to show that the order above is in fact optimal, but we do not pursue this here. We however present some numerical tests which are consistent with this belief. 

Of course our result is only partial, since we do not treat the case where both $f$ and $\Phi$ are arbitrary, which is the relevant case for practical situations. (Note in particular that, in the case $f(x)=x$ that we treat here, there are faster methods than Monte Carlo for option pricing, such as the Fourier inversion techniques described in \cite{AJ20}). It is not clear if the proof below can be extended to this general case, the induction argument relying strongly on the fact that $f$ is linear. Our result is also not directly applicable to option pricing, since the Romano-Touzi formula differs from the expectations we consider on two aspects  : (i) it depends not only on $\mathcal{I}$ but also on the realized variance $\int_0^T f(\widehat{W}_t)^2 dt$ (ii) it involves evaluation of functions which are typically smooth but not with bounded derivatives, unlike what we require here (we leave a rigorous investigation of these technical points to future research). 

Our method of proof is however quite flexible when it comes to the choice of the approximation, which we highlight by considering next a different approximation for $\cI$, namely that coming from the so-called hybrid scheme \cite{BLP15}. Recall that it consists in replacing $\widehat{W}_t$, at a grid-point $t$, by an approximation

\[ \widecheck{W}_t = \int_{t-\kappa T/N}^t (t-s)^{H-1/2} dW_s + \sum_{j=0}^{k-\kappa - 1}  \check{k}_{k-j}  \left(\int_{t_j}^{t_{j+1}} dW_s \right) \]
where the weights $\check{k}_{\ell}$ correspond to approximating the kernel $k:r\mapsto r^{H-1/2}$ by a constant function on the interval $ [\ell T/N, (\ell+1)T/N]$. Any reasonable choice of the weights lead to a strong convergence of order $H$, but interestingly we observe here that for weak convergence the situation is very different. Indeed, considering quadratic $\Phi$, it is clear that many of the usual choices proposed in the literature lead to a weak convergence of order no better than $2H$. However, choosing the $\check{k}_{\ell}$ in order to match second moments of $\widecheck{W}_t$ and $\widehat{W}_t$, we show that (in the same cases as those considered above), the weak error is of order $N^{-H-1/2}$, see Theorem \ref{thm:hybrid} below. We note that this choice of weights had been proposed in \cite{HJM17}, along with the observation that they lead to a reduced error. Our results give a theoretical justification for the use of these weights.

Finally, we mention the related recent preprint \cite{BFN22}, which uses essentially the same method of proof to study the weak error of the Cholesky discretization when $f(x)=x$ (but they only obtain the suboptimal rate of $H+1/2$). They also prove that the weak rate is at least $2H$ for general $f$. Our results were obtained independently.

The organization of the article is as follows. In Section \ref{sec:2}, we consider the weak error of left-point approximation when the discretized process is sampled exactly. In Section \ref{sec:3}, we study the same quantity when this discretization is replaced by its approximation obtained from the hybrid scheme. Some technical proofs are relegated to Section \ref{sec:4}. 

\section{Weak error for exact discretization} \label{sec:2}
Without loss of generality we will consider the time horizon $T=1$ throughout. We consider a scalar Brownian motion $W$ on $[0,1]$, and the associated Riemann-Louville fBm of Hurst index $H \in (0,1/2)$, defined by
\[ \widehat{W}_t = \int_0^t K(t,s) dW_s, \;\;\;K(t,s) = (t-s)_+^{H-1/2}. \]

Given a fixed function $f$, we let 

\[\cI = \int_0^1 f(\widehat{W}_t) d W_t.\]

For a fixed $n$, we consider the discretization of the above It\^o integral
  \[ \cI' = \int_0^1 f(\widehat{W}_{\eta(t)}) dW_t\]
where for $t \in [0,1]$, $\eta(t) = \lfloor nt \rfloor/n $.

Given a test function $\Phi$, we consider the associated weak error
\[
 \mathcal{E}_{\Phi} = \E \left[\Phi \left(\cI \right) \right] - \E \left[\Phi \left(\cI' \right) \right]  .
\]

Our main result is then the following rate of convergence to $0$ of this quantity, in the cases where either $f$ is linear or $\Phi$ is cubic.

\begin{theorem} \label{thm:main}
Assume that $H \neq \frac{1}{6}$\footnote{We leave to the interested reader to check that, in the case $H=\frac{1}{6}$, the same proof gives a rate of $\frac{\log(n)}{n}$.} and either :

(1) $f(x)=x$ and $\Phi$ is a $C^{\left(2 \lceil1/4H \rceil+3\right) \wedge \left(2 \lceil1/2H \rceil+1\right)}_b$ function,

or

(2) $f$ is $C^2_b$ and $\Phi$ is a cubic polynomial.

Then there exists a constant $C$, which does not depend on $n$, such that
\[ \left| \mathcal{E}_{\Phi} \right| \leq C \left(\frac{1}{n}\right)^{(3H+1/2)\wedge 1 }. \]

\end{theorem}

Before proving the theorem, we provide a numerical illustration. In Figure \ref{fig:cubic} below, we plot $\cE_{\Phi}$ for $\Phi(x)=x^3/6$, $f(x)=x$, \footnote{Technically, this choice does not fulfill the assumptions of Theorem \ref{thm:main} since neither $\Phi$ nor $f$ are bounded, but it is straightforward to check that the proofs of both cases (1) and (2) still go through.} and various values of $H$ and $\Delta$. Note that in that case, $\cE_{\Phi}$ can be computed without Monte-Carlo simulations, since (see proof of Theorem \ref{thm:main}, case (2), p.8 below)
\[ \E[\cI^3] = 6 \int_{0\leq s \leq t \leq 1} dt\; ds \;\E[\widehat{W}_t \widehat{W}_s] K(t,s) = 6 \int_{0 \leq r\leq s \leq t \leq 1} dt\; ds \; dr \; K(t,s) K(t,r) K(s,r) \] 
which can be computed numerically, and similarly $\E[(\cI')^3] $ can be written as a sum involving the correlation function of $\widehat{W}$ on grid-points. The plot is consistent with rate $(3H+1/2) \wedge 1$ being optimal.

\begin{figure}[!h]
\includegraphics[width=400bp]{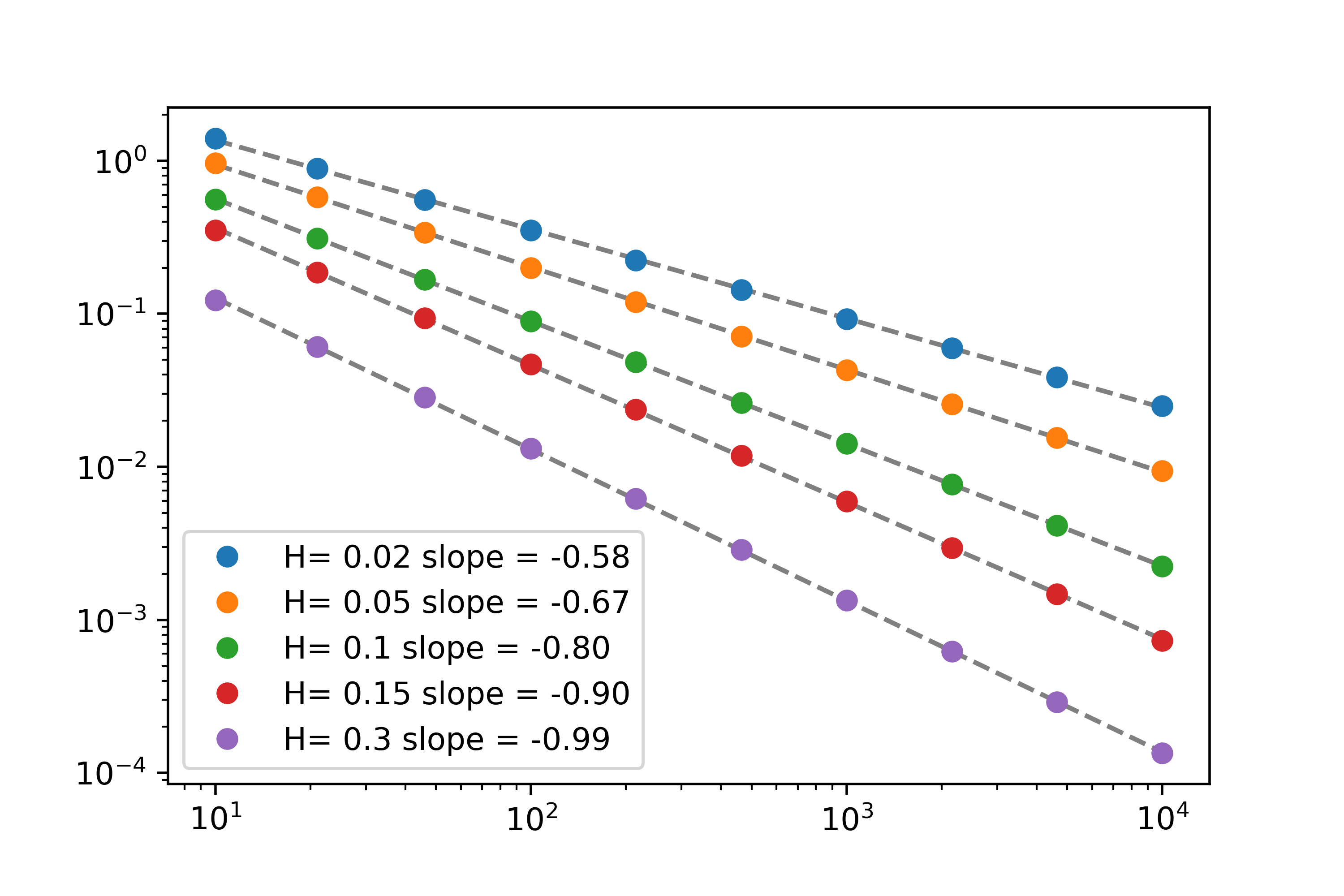}
\caption{Plot of $\cE_{x^3/6}$ as a function of $n$ for various values of $H$, when $f(x)=x$. The dotted lines are linear regressions, with slopes indicated in the legend.}
\label{fig:cubic}
\end{figure}

\subsection{Proof of Theorem \ref{thm:main}, Case (1).\\}

We first introduce some notations.
 Throughout the rest of this paper, we will write $f\lesssim g$ or $f=O(g)$ if $f \leq Cg$ for some constant $C$ that does not depend on $n$.

 Recall that $K(t,s) = (t-s)_+^{H-1/2}$, and we further let $K'(t,s) = K(\eta(t),s)$, $\Delta K = K'-K$, $\Delta(K^2) = (K')^2-K^2$, and for $\theta \in [0,1]$, let $K^\theta =(1- \theta) K +  \theta K'$. Similarly define $\Delta \widehat{W}$ and $\widehat{W}^\theta$.

We start by recording some elementary inequalities on the kernels.

\begin{lemma} \label{lem:inequalities}
It holds that
\begin{equation} \label{eq:strong}
\sup_{t \in[0,1]} \int_0^t ds \; \Delta K(t,s)^2  \lesssim n^{-2H}
\end{equation}

\begin{equation} \label{eq:ineq1}
\mbox{ For any $\alpha \geq 0$, $\alpha \neq H+\frac{1}{2}$,  }\sup_{t\in[0,1]}  \int_0^t  ds\; \left| \Delta K(t,s) \right| (t-s)^{\alpha} \lesssim n^{-H-1/2-\alpha},
\end{equation}

\begin{equation} \label{eq:wrquad}
\forall t \in [0,1],  \left| \int_0^t  ds\; \Delta (K^2)(t,s) \right| \lesssim  n^{-2H} \wedge n^{-1} t^{2H-1} .
\end{equation}

\begin{equation} \label{eq:ineq2}
\forall t \in [0,1], \;\;  \left| \int_0^t  ds \; \Delta K(t,s) \right|  \lesssim n^{-(H+1/2)} \wedge n^{-1} t^{H-1/2}.
\end{equation}

\end{lemma}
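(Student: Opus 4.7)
My plan is to treat the four estimates in a unified way: for each, split the $s$-integration at $\eta(t)$ into a ``right'' interval $[\eta(t), t]$ of length $h := t - \eta(t) \le 1/n$, on which $K'(t,s) = 0$ and $\Delta K(t,s) = -(t-s)^{H-1/2}$, and a ``left'' interval $[0,\eta(t)]$, on which both kernels are positive. On the right interval every integral is an explicit integral of a power of $(t-s)$ and contributes the expected positive power of $h$ directly. On the left interval the key identity
\[
\Delta K(t,s) \;=\; (\eta(t)-s)^{H-1/2} - (t-s)^{H-1/2} \;=\; \bigl(\tfrac{1}{2} - H\bigr) \int_{\eta(t)}^{t} (u-s)^{H-3/2}\,du
\]
makes the factor $h$ explicit, while the residual singular power (after the substitution $v = \eta(t)-s$, essentially $v^{H-3/2}$) is controlled by a dichotomy at $v = h$: the trivial bound $|\Delta K| \le v^{H-1/2}$ on $\{v \le h\}$, and the mean-value-type bound $|\Delta K| \le (\tfrac{1}{2}-H)\,h\,v^{H-3/2}$ on $\{v \ge h\}$.

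For (2.1), this is essentially the classical Riemann--Liouville Hölder estimate: for $s \le t$ one has $\int_0^s (K(t,r) - K(s,r))^2\,dr + \int_s^t K(t,r)^2\,dr \lesssim (t-s)^{2H}$, and applying this with $s = \eta(t)$ gives the claim. For (2.2), after the substitution $v = \eta(t)-s$, the left-interval contribution becomes $\int_0^{\eta(t)} \bigl[v^{H-1/2} - (v+h)^{H-1/2}\bigr](v+h)^\alpha\,dv$; splitting at $v = h$ and applying the two bounds above, together with the explicit integration on $[\eta(t),t]$, yields the advertised $n^{-H-1/2-\alpha}$ rate.

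For (2.3) and (2.4), I would exploit that $\int_0^t K(t,s)\,ds = \tfrac{t^{H+1/2}}{H+1/2}$ and $\int_0^t K(t,s)^2\,ds = \tfrac{t^{2H}}{2H}$ are elementary in $t$, and likewise for $K'$; each left-hand side then collapses to a finite algebraic combination of powers of $t$, $\eta(t)$ and $h$. The first branch of each min ($n^{-2H}$, resp.\ $n^{-(H+1/2)}$) follows from the trivial pointwise bound valid for all $t$, which is governing when $t \lesssim 1/n$; the second branch ($n^{-1} t^{2H-1}$, resp.\ $n^{-1} t^{H-1/2}$) follows from the first-order expansion $t^\beta - \eta(t)^\beta \le \beta\,\eta(t)^{\beta-1}\,h$ in the regime $t \gtrsim 1/n$, where it wins over the first branch.

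I do not anticipate any serious obstacle beyond careful bookkeeping of a handful of elementary integrals of power functions. The only delicate step is the cut-off at $v = h$ in (2.2): it is precisely this split that tames the nonintegrable singularity coming from the derivative of $v \mapsto v^{H-1/2}$ and converts the integral representation of $\Delta K$ into a clean power of $n$.
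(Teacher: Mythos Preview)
Your proposal is correct and follows essentially the same approach as the paper: for \eqref{eq:strong} and \eqref{eq:ineq1} the paper also splits the integral according to whether $t-s$ is of order $n^{-1}$ or larger, using the trivial bound $|\Delta K|\le K+K'$ near the singularity and the mean-value bound $|\Delta K|\lesssim n^{-1}(t-s)^{H-3/2}$ away from it, while for \eqref{eq:wrquad} and \eqref{eq:ineq2} it likewise computes the inner integrals exactly as $(2H)^{-1}(t^{2H}-\eta(t)^{2H})$ and $(H+\tfrac12)^{-1}(t^{H+1/2}-\eta(t)^{H+1/2})$ and then bounds the resulting power differences. Your intermediate split at $s=\eta(t)$ before the dichotomy at $v=h$ is a cosmetic reorganisation of the same argument.
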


\begin{proof}
In order to prove \eqref{eq:strong}, note that for $\eta(t) - s \geq n^{-1}$, one has
\[ \left| K(t,s) - K(\eta(t),s)\right| \leq C \frac{1}{n} (t-s)^{H-3/2}, \]
and split the integral into
\begin{equation*}
\int_{|t-s| \leq 2/n} ds \;(K(t,s)^2 + K(\eta(t), s)^2) + \int_{|t-s| \geq 2/n} ds \;C \frac{1}{n^2} (t-s)^{2H-3}  \lesssim n^{-2H}.
\end{equation*}
\eqref{eq:ineq1} is proved in the same way.

\eqref{eq:wrquad} and \eqref{eq:ineq2} are easy since the inner integrals can be computed exactly, e.g.
\[
 \int_0^t  ds\; (K(t,s)^2 - K'(t,s)^2) = (2H)^{-1} \left(t^{2H} - \eta(t)^{2H}\right) \lesssim  n^{-2H} \wedge n^{-1} t^{2H-1}   
  \]
\end{proof}
Recall that

\[\cI = \int_0^1 \widehat{W}_t d W_t,  \;\;\;\; \cI' = \int_0^1 \widehat{W}_{\eta(t)} dW_t,\]
and let
\[ \Delta \cI = \cI' - \cI  = \int_0^1 \Delta \widehat{W}_t dW_t, \;\;\; \mbox{ and for } \theta \in [0,1], \;\;\cI^\theta = (1-\theta) \cI' + \theta \cI = \int_0^1 \widehat{W}^\theta_t dW_t. \]

We have 
\begin{align*}
\mathcal{E}_{\Phi}=  \int_0^1 d\theta\;  \E \left[ {\Phi}'(\cI^\theta)    \Delta \cI \right].
\end{align*}

We denote by $D$ the Malliavin derivative operator (w.r.t. $W$). Note that $\Delta \cI = \int_0^1  \left(\int_0^t  \Delta K(t,s) dW_s\right) dW_t$, being a double Wiener integral, satisfies $D_{s} D_{t} \Delta \cI = \Delta K(t,s)$ for $s \leq t$.

Applying twice the integration by parts formula from Malliavin calculus, we obtain
\begin{align*}
\mathcal{E}_{\Phi} &= \int_0^1 d\theta  \int_0^1 dt\int_0^t ds   \; \E \left[D_s D_t ({\Phi}'(\cI^\theta))  \right] \Delta K(t,s).
\end{align*} 

Recall that $D_t \left( \int_0^1 u_r dW_r\right) = u_t + \int_t^1 D_t u_r dW_r$, as long as $u$ is an adapted Malliavin differentiable process. This yields 
\[ D_t \cI^\theta =  \widehat{W}^\theta_{t} +  \overline{W}^\theta_t =: \widetilde{W}_t^\theta,\]
where
\[
 \overline{W}^\theta_t := \int_t^1 K^\theta(r,t) dW_r.
 \]
 
We also have for $s \leq t$
\[ D_s D_t \cI^{\theta} = K^\theta(t,s) .\]

Using the chain rule for the Malliavin derivative, this leads to 
\begin{align} \label{eq:EPsi}
\mathcal{E}_{\Phi} &= \int_0^1 d \theta \int_0^1 dt \int_0^t ds\; \E\left[\Phi^{(3)}(\cI^\theta) \widetilde{W}_s^\theta \widetilde{W}^\theta_t\right] \Delta K(t,s) \nonumber \\
&\;\;\;\;\;\; +   \int_0^1 d \theta \int_0^1 dt \int_0^t ds \; \E\left[\Phi^{''}(\cI^\theta)\right] K^\theta(t,s) \Delta K(t,s) 
\end{align}

In order to estimate the first term, we need to study the continuity properties of the expectation appearing in the integral. This is done in the following lemma, the proof of which is a bit tedious and relegated to section \ref{subsec:41}. (Note that when $\Psi \equiv 1$ and $\theta=0$, the considered quantity is simply the correlation function of the Gaussian process with kernel $K(t,s)+K(s,t) = |t-s|^{H-1/2}$, which has similar properties as the correlation function of the fBm, in particular $2H$-H\"older continuity).

\begin{lemma} \label{lem:CorrelPhi}
Given a function $\Psi: \R \to \R$, for any $\theta \in [0,1]$, the map
\begin{equation*}
C_{\Psi}^\theta : (s,t) \mapsto \E\left[\Psi(\cI^\theta)\widetilde{W}^\theta_s \widetilde{W}^\theta_{t} \right]
\end{equation*}
satisfies, in the case where $\Psi$ is bounded,
\begin{equation*}
\left| C_{\Psi}^{\theta}(t,t)\right| \lesssim 1+ \ce(t)^2,
\end{equation*}
and if in addition $\Psi \in C^1_b$,
\begin{equation*}
\left| C_{\Psi}^{\theta}(t,t) - C_{\Psi}^{\theta}(t,s) \right|\lesssim (t-s)^{2H} + n^{-2H} + \ce(t)^2 + \ce(t)\ce(s),
\end{equation*}
where$\ce(t) := n^{-1/2} \left( \lceil nt \rceil/n -t\right)^{H-1/2}$ satisfies
\begin{equation*}
\int_0^1  \ce(t)^2 dt \lesssim n^{-2H},\;\;\; \int_0^1 dt \int_0^t ds\; |\Delta K(t,s)| \ce(t) \ce(s) \lesssim n^{-3H-1/2}.
\end{equation*}
\end{lemma}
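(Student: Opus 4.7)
The proof strategy is to use Malliavin integration by parts to separate the role of $\Psi$ from the Gaussian structure of $\tilde{W}^\theta$, combined with a careful decomposition of the kernel into its ``unperturbed'' part $\tilde{K}^0_t(r) := |t-r|^{H-1/2}$ and a perturbation quantified by $\ce$. A natural choice is
\[
\ce(t) := \|\tilde{K}^\theta_t - \tilde{K}^0_t\|_{L^2([0,1])},
\]
whose integrability properties are direct consequences of Lemma \ref{lem:inequalities}: the $L^2$ bound $\int \ce^2 \lesssim n^{-2H}$ follows from \eqref{eq:strong} applied to both the $r<t$ and $r>t$ halves of the kernel, while $\int dt \int_0^t ds |\Delta K(t,s)|\,\ce(t)\ce(s) \lesssim n^{-3H-1/2}$ follows from \eqref{eq:ineq1} combined with Cauchy--Schwarz in $s$.

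The pointwise bound (bounded $\Psi$) is then immediate: Cauchy--Schwarz gives $|C_\Psi^\theta(t,t)| \leq \|\Psi\|_\infty\,\E[(\tilde{W}^\theta_t)^2]$, and splitting $\tilde{W}^\theta_t = \tilde{W}^0_t + (\tilde{W}^\theta_t - \tilde{W}^0_t)$ reduces this to $\E[(\tilde{W}^0_t)^2] = \int_0^1 |t-r|^{2H-1}dr \lesssim 1$ plus a perturbation of size $\ce(t)^2$.

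For the second bound (which I read as a H\"older-type modulus of continuity in the second variable, interpreting $C_1^\theta(t,s)$ in the statement as a likely typo for $C_\Psi^\theta(t,s)$), I would write
\[
C_\Psi^\theta(t,t) - C_\Psi^\theta(t,s) = \E\bigl[\Psi(\cI^\theta)\tilde{W}^\theta_t(\tilde{W}^\theta_t - \tilde{W}^\theta_s)\bigr]
\]
and apply Malliavin integration by parts against the first-chaos factor $\tilde{W}^\theta_t - \tilde{W}^\theta_s = \int h(u)\,dW_u$ with $h := \tilde{K}^\theta_t - \tilde{K}^\theta_s$. Using $D_u \cI^\theta = \tilde{W}^\theta_u$ and $D_u \tilde{W}^\theta_t = \tilde{K}^\theta_t(u)$, this decomposes into
\[
\E[\Psi(\cI^\theta)]\,\bigl\langle \tilde{K}^\theta_t, \tilde{K}^\theta_t - \tilde{K}^\theta_s\bigr\rangle + \E\left[\Psi'(\cI^\theta)\tilde{W}^\theta_t \int_0^1 h(u)\tilde{W}^\theta_u\,du\right].
\]
The deterministic inner product, after substituting $\tilde K^0$ for $\tilde K^\theta$ at an expense absorbed into $\ce(t)^2 + \ce(s)^2$, is just the standard fBm covariance increment $\E[\tilde W^0_t(\tilde W^0_t - \tilde W^0_s)]$, whose $O((t-s)^{2H})$ bound is classical, with the $n^{-2H}$ term handling the near-diagonal region $|t-s| \lesssim 1/n$. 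The $\Psi'$ stochastic term is bounded via Cauchy--Schwarz applied after splitting $\tilde{W}^\theta_t = \tilde{W}^0_t + (\tilde W^\theta_t - \tilde W^0_t)$: the perturbation piece contributes at most $\ce(t)\,\|\int h\tilde W^\theta\|_{L^2} \lesssim \ce(t)\bigl((t-s)^{H+1/2} + \ce(s)\bigr)$, collapsing to the allowed $\ce(t)\ce(s) + \ce(t)^2$, and the main piece (with $\tilde W^0_t$) is controlled using $\int_0^1 \bigl||t-r|^{H-1/2} - |s-r|^{H-1/2}\bigr|dr \lesssim (t-s)^{H+1/2}$, which is subdominant to $(t-s)^{2H}$ since $H \leq 1/2$.

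The main technical obstacle is producing the sharp $(t-s)^{2H}$ H\"older exponent: a naive Cauchy--Schwarz on $\E[\Psi(\cI^\theta)\tilde W^\theta_t(\tilde W^\theta_t - \tilde W^\theta_s)]$ yields only $(t-s)^H$, and the Malliavin integration-by-parts step is exactly what transforms the Gaussian $L^2$ increment into a deterministic convolutional inner product carrying the extra Hölder regularity. A secondary difficulty is keeping all error terms quadratic in $\ce$ rather than linear, which requires splitting $\tilde{W}^\theta_t$ into its ideal and perturbation parts \emph{before} applying Cauchy--Schwarz, since the application in \eqref{eq:EPsi} demands the $\int|\Delta K|\ce(t)\ce(s)$ bound rather than a linear-$\ce$ version.
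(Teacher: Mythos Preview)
Your overall strategy---Malliavin integration by parts to split off $\E[\Psi(\cI^\theta)]\,C_1^\theta(t,s)$, followed by a decomposition $\tilde K^\theta = \tilde K^0 + \theta\,\Delta\tilde K$---is exactly the route the paper takes. Your definition of $\ce(t)$ as the $L^2$ norm of the kernel perturbation is equivalent, up to an additive $n^{-H}$, to the paper's explicit choice $\ce(t)=h^{1/2}(\ep(t)-t)^{H-1/2}$, and your reading of the typo $C_1^\theta \leadsto C_\Psi^\theta$ is correct.

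There is, however, a genuine gap in the step you flag as ``secondary''. The claim that substituting $\tilde K^0$ for $\tilde K^\theta$ in $\langle\tilde K^\theta_t,\tilde K^\theta_t-\tilde K^\theta_s\rangle$ costs only $\ce(t)^2+\ce(s)^2$ cannot be obtained by Cauchy--Schwarz, regardless of how you split beforehand. After any such decomposition you are left with the cross term
\[
\bigl\langle \tilde K^0_t,\;\theta(\Delta\tilde K_t - \Delta\tilde K_s)\bigr\rangle \;=\;\theta\,\E\bigl[\tilde W^0_t(\Delta\tilde W_t-\Delta\tilde W_s)\bigr],
\]
and Cauchy--Schwarz only gives $\|\tilde K^0_t\|_{L^2}\bigl(\ce(t)+\ce(s)\bigr)\lesssim \ce(t)+\ce(s)$, which is linear. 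Feeding a linear $\ce$-term into the estimate for $\mathfrak E_1$ produces $\int\!\!\int |\Delta K|\,\ce \lesssim n^{-2H-1/2}$, i.e.\ only rate $2H+\tfrac12$, not $3H+\tfrac12$. The paper handles this term by a direct pointwise kernel computation (splitting the $r$-integral into the regions $|r-t|\le 2h$, $|r-s|\le 2h$, and the complementary far region), obtaining
\[
\E\bigl[\tilde W^0_t(\Delta\tilde W_t-\Delta\tilde W_s)\bigr]\;\lesssim\; h^{2H}+(t-s)^{2H}+h^{H}\bigl(\ce(t)+\ce(s)\bigr),
\]
the point being that the explicit structure of $\Delta\tilde K$ forces an extra $h^{H}$ prefactor on the linear part, which then collapses (via Young) to $n^{-2H}+\ce(t)^2+\ce(s)^2$. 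So your asserted bound is \emph{true}, but its proof requires this direct estimate rather than the splitting-then-Cauchy--Schwarz mechanism you describe.

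A smaller point: for the double integral $\int\!\!\int |\Delta K|\,\ce(t)\ce(s)$, ``Cauchy--Schwarz in $s$'' read literally yields only $n^{-3H}$. What works is $\ce(t)\ce(s)\le\tfrac12(\ce(t)^2+\ce(s)^2)$ combined with the \emph{two} $L^1$ bounds $\sup_t\int_0^t|\Delta K(t,s)|\,ds\lesssim n^{-H-1/2}$ and $\sup_s\int_s^1|\Delta K(t,s)|\,dt\lesssim n^{-H-1/2}$; the paper instead uses its explicit $\ce$ and computes $\sup_t\int_0^t|\Delta K(t,s)|\,\ce(s)\,ds\lesssim n^{-2H-1/2}$ directly.
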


\begin{proof}[Proof of Theorem \ref{thm:main} (1)]
We prove by induction on $k$ the slightly more general claim : if $\Phi \in C^{k}_b$, $k$ an odd integer, then, uniformly over $\theta \in [0,1]$, 
\begin{equation}
\E \left[ \Phi\left( \cI^\theta \right)\right] - \E \left[\Phi \left( \cI^{1-\theta} \right)\right] = O\left(n^{-(3H+1/2)\wedge 1}\right) + O\left(n^{-kH}\right). 
\end{equation}
(the result of the theorem corresponds to $\theta=0$, $k H \geq  \left(3H +1/2\right) \wedge 1$).

The case $k=1$ is simple (using that $\Phi$ is Lipschitz and strong error is of order $H$), and we now fix $k\geq 3$.

By the same computation as in \eqref{eq:EPsi}, one has
\begin{align*}
\E \left[ \Phi\left( \cI^\theta \right)\right] - \E \left[\Phi \left( \cI^{1-\theta} \right)\right]   = &\int_\theta^{1-\theta}d\gamma \int_0^1 dt \int_0^t ds\; \Delta K(t,s) \E\left[ \Phi^{(3)}\left( \cI^\gamma \right) 
\widetilde{W}^\gamma_s \widetilde{W}^\gamma_t  \right] \\
&+  \int_\theta^{1-\theta} d\gamma \int_0^1 dt \int_0^t ds\; \Delta K(t,s) K^\gamma(t,s) \E\left[ \Phi''\left(  \cI^\gamma \right)\right] \\
&=: \mathfrak{E}_1 + \mathfrak{E}_2.
\end{align*}

For the first term, using Lemma \ref{lem:CorrelPhi} in the first inequality (with $\Psi=\Phi^{(3)}$), we have that for any $\gamma \in [0,1]$, 
\begin{align*}
& \int_0^1 dt \int_0^t ds \Delta K(t,s) \E\left[ \Phi^{(3)}\left( \cI^\gamma\right) \widetilde{W}^\gamma_s \widetilde{W}^\gamma_t  \right] \\
=& \int_0^1 dt C_{\Phi^{(3)}}(t,t) \int_0^t \Delta K (t,s) ds -  \int_0^1 dt \int_0^t ds \; \Delta K (t,s) \left(C_{\Phi^{(3)}}(t,t)  -C_{\Phi^{(3)}}(t,s)  \right) \\
  \leq &\int_0^1 dt C_{\Phi^{(3)}}(t,t) \int_0^t \Delta K (t,s) ds   \\
  &+   \int_0^1 dt \int_0^t |\Delta K (t,s)| \left( O(|t-s|^{2H} + O(n^{-2H}) + O(\ce(t)^2) + O((n^{-H}+\ce(t)) \ce(s)) \right) ds \\
 = &O(n^{-1}) + O(n^{-(3H+1/2)}),
\end{align*}
where we have used \eqref{eq:ineq2} and \eqref{eq:ineq1}, and it follows that $\mathfrak{E}_1 = O\left(n^{-(3H+1/2)\wedge 1}\right)$.

(The case $k=3$ is slightly different, since $\Phi^{(3)}$ is only continuous and we cannot use the second inequality in Lemma \ref{lem:CorrelPhi}.  In that case, we only obtain
\[  \mathfrak{E}_1 \lesssim  \int_0^1 dt \int_0^t ds\left| \Delta K(t,s) \right| (1+ \ce(t))(1+\ce(s)) \lesssim n^{-H-1/2} \]
which is sufficient since it is still smaller than $O(n^{-3H})$.)

For the second term, we rewrite it as
\begin{align*}
 \mathfrak{E}_2 &= \alpha \int_0^1 \int_0^t dt\; ds\; \Delta (K^2)(t,s) + \beta \int_0^1\int_0^t dt \;ds\; (\Delta K(t,s))^2 \\
&= \alpha O(n^{-1}) + \beta O(n^{-2H}),
\end{align*}
using \eqref{eq:strong} and \eqref{eq:wrquad}, where
\[\alpha = \frac{1}{2}\int_\theta^{1-\theta} d\gamma\;  \E\left[ \Phi''\left(\cI^\gamma \right) \right] = O(1)\]
and
\begin{align*}
 \beta &= \int_{\theta}^{1-\theta} d\gamma\; (\gamma - \frac{1}{2})  \E\left[ \Phi''\left( \cI^\gamma \right)  \right] \\
&= \int_{\theta}^{1/2}d \gamma\;   (\gamma - \frac{1}{2})  \left( \E \left[ \Phi''\left( \cI^\gamma \right)\right] - \E \left[ \Phi''\left( \cI^{1-\gamma} \right)\right]\right).
\end{align*}

By the induction hypothesis, the integrand is $O\left(n^{-(3H+1/2)\wedge 1}\right) + O\left(n^{-(k-2)H})\right)$, uniformly over $\gamma \in [0,1/2]$, and we can conclude.

\end{proof}
\subsection{Proof of Theorem \ref{thm:main}, case (2)}

We keep the same notations as in the previous subsections, and note that $D_{s} f(\widehat{W}_t) = f'(\widehat{W}_t) K(t,s)$.

Then we have (using It\^o's formula in the first equality, and Malliavin integration by parts in the second)
\begin{align*}
 \E \left[ \left( \int_0^1 f(\widehat{W}_t)d{W}_t \right)^3 \right]  &= 3 \int_0^1 dt \; \E \left[ \left(\int_0^t f(\widehat{W}_s) dW_s\right) f(\widehat{W}_t)^2 \right]  \\
&=  6 \int_0^1 dt \int_0^t ds\; \E \left[  f(\widehat{W}_s) (ff')(\widehat{W}_t)  \right] K(t,s).
\end{align*}

The same computation holds if $K$ is replaced by $K'$, and we deduce that for $\Phi(x)=x^3$, the weak error is estimated by
\[ \cE_{x^3} \lesssim \cE^{(1)} +  \cE^{(2)}, \]
with
\[  \cE^{(1)} = \int_0^1 dt \int_0^t  ds\;  \phi_f(t,s)  \Delta K(t,s), \]
and 
\[  \cE^{(2)} =  \int_0^1 dt \int_0^t ds \;\left(\phi_f(t,s) - \phi_f(\eta(t),\eta(s)) \right) K'(t,s), \]
where
\[ \phi_f(t,s) := \E \left[ f(\widehat{W}_s) (ff')(\widehat{W}_t) \right]. \]

We then state the following lemma, whose proof is relegated to section \ref{subsec:42}.

\begin{lemma} \label{lem:Phi}
Let $\phi(t,s) = \E \left[ \psi(\widehat{W}_s, \widehat{W}_t)\right]$ for $\psi:\R^2 \to \R$ such that $\partial_1 \psi, \partial^2_1 \psi$ and $\partial_2 \psi$ are bounded. It then holds that for all $s\leq t$ in $[0,1]$,
\begin{equation} \label{eq:ineqphi1}
 \left| \phi(t,t) - \phi(t,s)  \right| +   \left| \phi(s,s) - \phi(t,s)  \right| \lesssim (t-s)^{2H} t^{-H}, 
\end{equation}
\begin{equation}\label{eq:ineqphi2}
 \left| \partial_s\phi(t,s) \right|   \lesssim n^{-1} \left( (t-s)^{2H-1} + s^{2H-1}\right) ,
 \end{equation}
\begin{equation}\label{eq:ineqphi3}
 \left| \partial_s\phi(t,s) \right|   \lesssim n^{-1} \left( (t-s)^{2H-1} + s^{2H-1}\right) ,
 \end{equation}
\end{lemma}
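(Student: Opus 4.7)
My plan is to exploit the joint Gaussianity of $(\hat{W}_s, \hat{W}_t)$ together with Taylor expansion in the first variable and Gaussian integrations by parts, using only the stated smoothness that $\psi, \partial_1\psi, \partial_1^2\psi, \partial_2\psi$ are bounded. The key technical inputs are sharp estimates on the covariance entries $a'(s) := \mathrm{Var}(\hat{W}_s) = s^{2H}/(2H)$, $b'(t) := \mathrm{Var}(\hat{W}_t) = t^{2H}/(2H)$, and $c(s,t) := \mathrm{Cov}(\hat{W}_s, \hat{W}_t)$, obtained directly from the Riemann--Liouville kernel representation.

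For \eqref{eq:ineqphi1}, I would write $\phi(t,t) - \phi(t,s) = \E[\psi(\hat{W}_t, \hat{W}_t) - \psi(\hat{W}_s, \hat{W}_t)]$ and Taylor-expand in the first argument,
\[ \psi(y,y) - \psi(x,y) = (y-x) \partial_1 \psi(x,y) + \tfrac12 (y-x)^2 \partial_1^2 \psi(\xi, y). \]
The remainder contributes $O(\E|\hat{W}_t - \hat{W}_s|^2) = O((t-s)^{2H})$ using $\|\partial_1^2 \psi\|_\infty$. The linear term is treated by Gaussian IBP: writing $(X,Y) := (\hat{W}_s, \hat{W}_t)$, the identity $\E[Z F(X,Y)] = \mathrm{Cov}(Z,X) \E[\partial_1 F(X,Y)] + \mathrm{Cov}(Z,Y) \E[\partial_2 F(X,Y)]$ applied with $Z = Y - X$, $F = \partial_1 \psi$ yields
\[ \E[(Y-X) \partial_1\psi(X,Y)] = (c-a') \E[\partial_1^2 \psi(X,Y)] + (b'-c) \E[\partial_1\partial_2 \psi(X,Y)]. \]
A direct manipulation of the kernel (substituting $v = s - u$ and splitting the range at $v = t-s$) shows $|c-a'|, |b'-c| \lesssim (t-s)^{2H}$, so the first term on the right is $O((t-s)^{2H})$. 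The second term involves $\partial_1\partial_2\psi$, which is \emph{not} assumed bounded, so I would perform a further integration by parts against the bivariate Gaussian density $p(x,y)$ of $(X,Y)$:
\[ \E[\partial_1\partial_2\psi(X,Y)] = -\E[\partial_2\psi(X,Y) \, \partial_x \log p(X,Y)], \]
with $\partial_x \log p(x,y) = -(b'x - cy)/\det\Sigma$, where $\Sigma$ is the covariance matrix of $(X,Y)$. Estimating $\E|b'X - cY| \leq \sqrt{b' \det\Sigma}$ together with the lower bound $\det\Sigma \geq a' \cdot \mathrm{Var}(V) \gtrsim s^{2H}(t-s)^{2H}$, coming from the orthogonal decomposition $\hat{W}_t = U + V$ with $U \in \F_s$ and $V$ independent of $\F_s$, $\mathrm{Var}(V) = (t-s)^{2H}/(2H)$, produces the claimed $(t-s)^{2H} t^{-H}$ factor after multiplying by $(b'-c)$. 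The estimate for $|\phi(s,s) - \phi(t,s)|$ is handled symmetrically by Taylor-expanding in the second variable.

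For \eqref{eq:ineqphi2} and \eqref{eq:ineqphi3}, I would express $\phi(t,s) = \iint \psi(x,y) p_{s,t}(x,y)\, dx\, dy$ and differentiate under the integral. The derivatives $\partial_s \log p_{s,t}$ and $\partial_t \log p_{s,t}$ are computed via the chain rule through $a'$, $b'$, $c$; a direct calculation gives $\partial_s a'(s) = s^{2H-1}$ and $|\partial_s c(s,t)|, |\partial_t c(s,t)| \lesssim s^{H-1/2} t^{H-1/2} + (t-s)^{2H-1}$, from which the stated pointwise derivative bounds follow (after estimating the Gaussian moments of the polynomial-in-$(x,y)$ prefactor). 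The factor $1/n$ reflects that these inequalities will be applied in integrated form over intervals of length at most $1/n$, corresponding to the mesh size in the discretization used to bound $|\phi_f(t,s) - \phi_f(\eta(t),\eta(s))|$.

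The main obstacle is the appearance of the mixed derivative $\partial_1\partial_2\psi$ in the Gaussian IBP step, since it is not assumed bounded; the resolution is the secondary density-IBP above, which transfers the derivative onto the Gaussian log-density at the cost of a Stein-type factor controlled via the lower bound $\det\Sigma \gtrsim s^{2H}(t-s)^{2H}$ from the orthogonal decomposition of $\hat{W}_t$. Obtaining the sharp $(t-s)^{2H} t^{-H}$ in \eqref{eq:ineqphi1} further requires balancing this Stein bound with the crude Lipschitz estimate $O((t-s)^H)$ in the regimes where the two competing bounds are each optimal.
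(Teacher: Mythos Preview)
Your Taylor-expansion-plus-Stein approach has two genuine gaps.

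First, the ``symmetric'' treatment of $|\phi(s,s)-\phi(t,s)|$ by Taylor-expanding in the second variable would require $\partial_2^2\psi$ to be bounded, which is \emph{not} among the hypotheses (only $\psi,\partial_1\psi,\partial_1^2\psi,\partial_2\psi$ are assumed bounded). With only $\partial_2\psi$ bounded you get at best the first-order Lipschitz estimate $(t-s)^H$, which is too weak for the application.

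Second, even for $|\phi(t,t)-\phi(t,s)|$, your density-IBP bound on $(b'-c)\,\E[\partial_1\partial_2\psi(X,Y)]$ does not yield the claimed rate. With $\E|b'X-cY|\le\sqrt{b'\det\Sigma}$ and $\det\Sigma\gtrsim s^{2H}(t-s)^{2H}$ you obtain
\[
(b'-c)\,\frac{\sqrt{b'}}{\sqrt{\det\Sigma}}\;\lesssim\; (t-s)^{2H}\cdot \frac{t^H}{s^H(t-s)^H}\;=\;(t-s)^H\, t^H s^{-H},
\]
which for $s$ comparable to $t$ is only of order $(t-s)^H$, a full factor $(t-s)^H$ short of the needed $(t-s)^{2H}t^{-H}$. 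The proposed ``balancing'' with the crude Lipschitz bound $O((t-s)^H)$ cannot help, since both competing estimates have the same $(t-s)^H$ behaviour in this regime.

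The paper avoids both issues by a different parametrisation. Writing $(\hat W_s,\hat W_t)\stackrel{d}{=}\bigl(\alpha Z_1+\sqrt{\beta}\,Z_2,\ \gamma Z_1\bigr)$ with independent standard normals $Z_1,Z_2$ and $\alpha=c/\sqrt{b'}$, $\beta=a'-c^2/b'$, $\gamma=\sqrt{b'}$, it proves that $\varphi(\alpha,\beta,\gamma):=\E[\psi(\alpha Z_1+\sqrt{\beta}Z_2,\gamma Z_1)]$ is globally Lipschitz in all three arguments: in $\alpha$ and $\gamma$ because $\partial_1\psi,\partial_2\psi$ are bounded, and in $\beta$ (including at $\beta=0$) via the one-variable Stein identity in $Z_2$, which gives $\partial_\beta\varphi=\tfrac12\,\E[\partial_1^2\psi(\cdots)]$ and hence requires only $\partial_1^2\psi$ bounded. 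No mixed derivative $\partial_1\partial_2\psi$ or second derivative $\partial_2^2\psi$ ever appears. All the inequalities of the lemma then reduce to estimates on $|\alpha-\alpha'|$, $|\beta-\beta'|$, $|\gamma-\gamma'|$ and their $s$- and $t$-derivatives, i.e., to covariance estimates for the Riemann--Liouville fBm, from which the factor $(t-s)^{2H}t^{-H}$ drops out directly (e.g.\ $|\gamma-\alpha|=(b'-c)/\sqrt{b'}\lesssim (t-s)^{2H}t^{-H}$ and $|\beta|=\det\Sigma/b'\lesssim (t-s)^{2H}$).
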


We can then finish the proof of the Theorem. First we note that, using \eqref{eq:ineqphi1},
\begin{align*}
 \cE^{(1)} \lesssim \int_0^1 dt\; \phi_f(t,t) \left(\int_0^t \Delta K(t,s) ds \right) + \int_0^1 dt \; t^{-H} \int_0^t  ds\; |\Delta K(t,s) |O(|t-s|^{2H}) \lesssim n^{-1} + n^{-3H-1/2}.
\end{align*}
We then estimate $\cE^{(2)}$ by splitting the integration domain : 
\begin{align*}
 \cE^{(2)} & = \int_{0 \leq s \leq t \leq 1} dt \; ds \; \left(\phi_f(t,s) - \phi_f(\eta(t),\eta(s)) \right) K'(t,s) \\
&= \int_{0 \leq s \leq t \leq 1, t-s \leq 2 n^{-1}, s \geq 2n^{-1}} (...) + \int_{0 \leq s \leq t \leq 1, t-s \geq 2 n^{-1}, s \geq 2n^{-1}} (...) + \int_{0 \leq s \leq t \leq 1, s \leq 2n^{-1}} (...) \\
&=: \cE^{(2)}_1 + \cE^{(2)}_2 + \cE^{(2)}_3. 
\end{align*}

First, note that
\eqref{eq:ineqphi1} implies that,  if $2n^{-1} \leq s \leq t \leq s +2n^{-1}$, 
\[ \;\; \phi(t,s)- \phi(\eta(t),\eta(s)) \; \lesssim  \; n^{-2H} ( s \wedge \eta(t))^{-H}\; \lesssim\;  n^{-2H} t^{-H} \]
This yields that
\begin{align*}
\cE^{(2)}_1 &\lesssim n^{-2H}  \int_{0 \leq s \leq t \leq 1, \; t-s \leq 2n^{-1}} dt\;ds \; t^{-H} K'(t,s) \\
& \lesssim n^{-2H} \int_0^1 dt \; t^{-H} \left(\int_{0}^{2n^{-1}} v^{H-1/2} dv \right) \\
&\lesssim n^{-3H - 1/2}.
\end{align*}

Second, using \eqref{eq:ineqphi2}-\eqref{eq:ineqphi3}, we see that for $t-s \geq 2n^{-1}$, $s\geq 2n^{-1}$, 
\[  \phi(t,s) - \phi(\eta(t),\eta(s)) \lesssim n^{-1} t^{-H} \left( (t-s)^{2H-1} + s^{2H-1} \right),  \]
and this yields 
\begin{align*}
\cE^{(2)}_2 &\lesssim \int  \int_{t-s \geq 2 n^{-1}, s \geq 2 n^{-1}} dt \;ds \;K(t,s) \left( (t-s)^{2H-1}+s^{2H-1} \right) n^{-1} \\
 &= n^{-1} \int_{0}^1 ds \int_{(s+2n^{-1})\wedge 1}^{1} dt\;  (t-s)^{3H-3/2} + n^{-1} \int_{0}^1 ds s^{2H-1} \int_{(s+2n^{-1})\wedge 1}^{1} dt\;  (t-s)^{H-1/2} \\
 &\lesssim n^{-1} n^{(3H-1/2) \vee 0} + n^{-1} \\
& \lesssim  n^{-(3H+1/2)\wedge 1}.
\end{align*}

Finally, it remains to estimate

\[ \cE^{(2)}_3 \lesssim  \int_0^1 dt \int_{0 \leq s \leq 2n^{-1}}ds\; K'(t,s) \lesssim \int_0^1 dt  \left(\eta(t)^{H+1/2} - (\eta(t)-2n^{-1})_+^{H+1/2} \right) \lesssim n^{-1}.\]

\section{Weak error for the hybrid scheme} \label{sec:3}

In this section, we consider the approximation induced by the hybrid scheme, which was introduced in \cite{BLP15}.

We keep the notations from the previous section, in particular write $K(t,s) = (t-s)_+^{H-1/2} = k(t-s)$, fix a positive integer $n$ and let $h=\frac{1}{n}$. For the hybrid scheme, the kernel $K'$ can be written as 
\[ K'(t,s) = \widecheck{K}(\eta(t),s)= \check{k}(\eta(t)-s),\]
 where, for a fixed integer $\kappa$, 
\begin{equation}
\check{k}(r) = k(r) \mbox{ for } 0 < r < \kappa h, 
\end{equation}
\begin{equation}
\check{k}(r) =  \check{k}_{\ell} \mbox{ for } \ell h \leq r < (\ell+1) h, \; \kappa \leq \ell \leq (n-1),
\end{equation}
where the $\check{k}_{\ell}$ are given weights.

We then consider
\[ \widecheck{W}_t = \widecheck{W}_{\eta(t)} =  \int_0^{\eta(t)}  \widecheck{K}(\eta(t),s) dW_s ,\]
\[ \widecheck{\cI} = \int_0^1 f\left( \widecheck{W}_t\right) dW_t \]
and will be interested in
\[ \widecheck{\cE}_{\Phi} =  \E\left[ \Phi(\cI)\right] -  \E\left[ \Phi(\ccI)\right].\]

Let us discuss the choice of the weights $\check{k}_{\ell}$. Classical choices proposed in the literature are  e.g
\begin{itemize}
\item $\check{k}_{\ell} = k((\ell+1)h)$ (left-point)
\item $\check{k}_{\ell}= k((\ell+1/2)h)$ (mid-point)
\item $\check{k}_{\ell} = \frac{1}{h} \int_{ \ell h}^{(\ell+1) h} k $. (it is shown in \cite{BLP15} that this choice minimizes the mean square error (MSE) between $\cI$ and $\widecheck{\cI}$).
\end{itemize}
However, for the above choices, the weak error cannot be of better order than $2H$, as can be seen by considering quadratics. For instance, for the MSE minimizing weights,
\[ \E[ \widehat{W}_1^2] - \E[ \widecheck{W}_1^2]   \sim_{n \to \infty}  C n^{-2H}\]
where
\[ C =  \sum_{i=\kappa}^{\infty} \left( \int_i^{i+1} k^2 - (\int_i^{i+1} k)^2\right) >0 \]
(this uses the self-similarity of $k$). A similar formula holds true for all the other grid-points, and this yields (for $f(x)=x$), $\left|\E[\cI^2] - \E[\ccI^2] \right|\gtrsim n^{-2H}$.

The good choice in our context is therefore to choose the weights that match the second moment, namely :
\begin{equation} \label{eq:MMweights}
\check{k}_{\ell} =  \left( \frac{1}{h} \int_{ \ell h}^{(\ell+1) h} k^2(r) dr \right)^{1/2} 
\end{equation} 
 Then it holds that $\int K(\eta(t),r)^2 dr = \int \widecheck{K}(\eta(t), r)^2 dr$ for each $t$, so that the second moment $\E[\ccI^2] $ coincides with $\E[(\cI')^2]$ obtained by exact discretization, and the weak rate for quadratics is $1$. (These weights have been first suggested in \cite{HJM17}).
   
  We illustrate these considerations in Figure \ref{fig:quadratic hybrid}, where we plot $\ccE_{x^2}$ as a function of $n$ for the four choice of weights described above, for $H=0.02$. We see that, for the first three choices, as expected, the eventual decrease becomes very slow. In addition, while the left-point weights always give much worse results, for small values of $n$ mid-point or MSE minimizing weights give comparable results to the moment matching ones (this is due to the fact that while the order of convergence in $n^{-2H}$ is the same in these three cases, the multiplying  constant will be significantly smaller for the second and third choices).  
  
 \begin{figure}[!h]
\includegraphics[width=400bp]{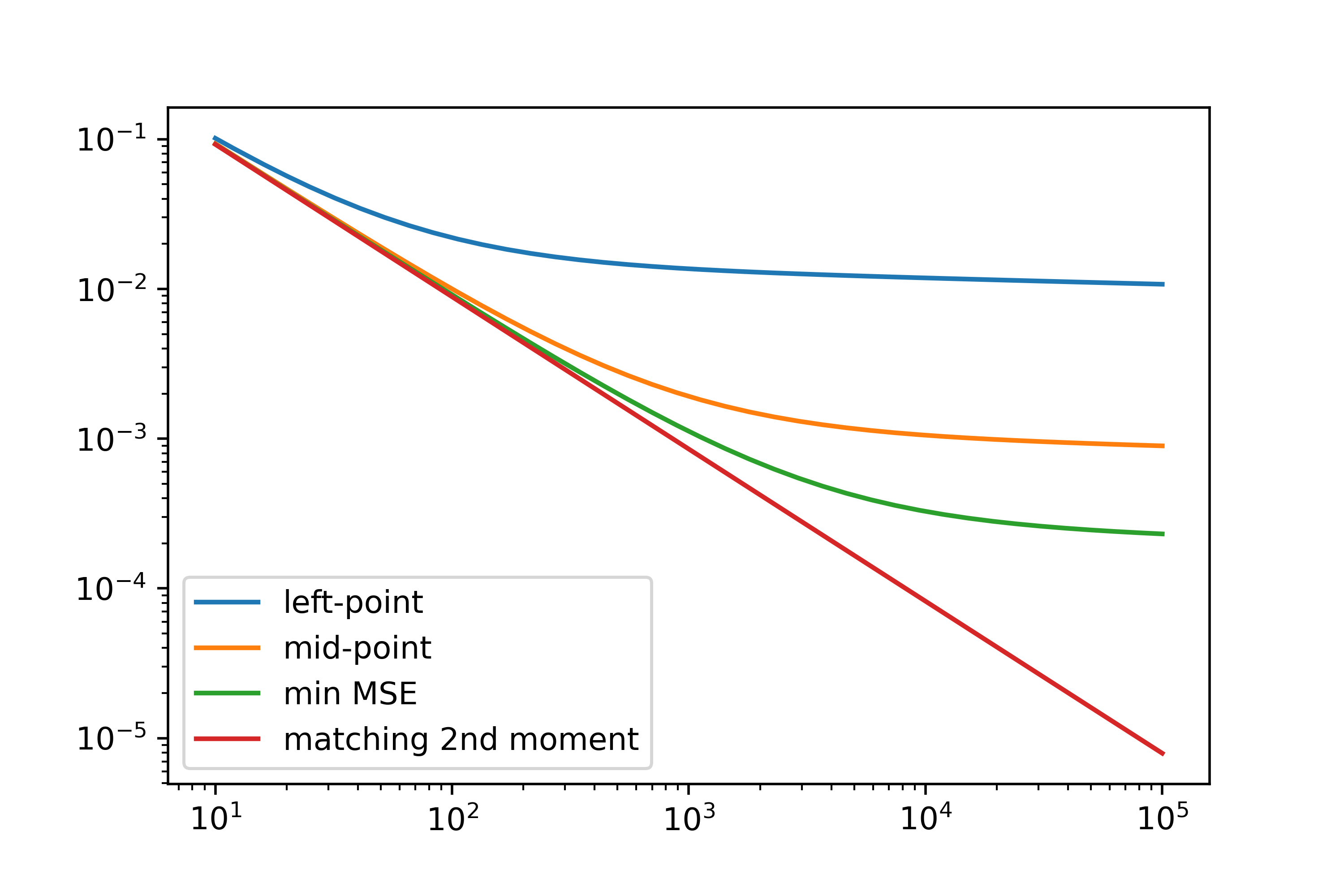}
\caption{Plot of $\cE_{x^2}$ as a function of $n$ for various choices of weights and $H=0.02$, $\kappa=1$.}
\label{fig:quadratic hybrid}
\end{figure}

Let us formalize the properties of $\widecheck{K}$ (with weights  chosen as in \eqref{eq:MMweights}) that we will use.

\begin{assumption} \label{asn:cK}
The kernel $\widecheck{K}$ satisfies $\widecheck{K}(t,s) =  \widecheck{K}(\eta(t),s)$ for all $s \leq t \in [0,1]$, and in addition, for all grid-points $t=\eta(t)$,
\begin{equation} \label{eq:matching}
\int_0^t \widecheck{K}(t,s)^2 ds = \int_0^t K(t,s)^2 ds,
\end{equation}

\begin{equation} \label{eq:cK2}
\forall s\leq t,  \;\;\;\; \widecheck{K}(t,s) = K(t,\check{s}), \;\;\;\mbox{ where }\check{s} = \check{s}(t,s) \mbox{ satisfies }|\check{s}-s|\leq h  \mbox{ always, and }\check{s}=s \mbox{ for }t-s\leq \kappa h,
\end{equation}
where $\kappa \geq 1$ is fixed.
\end{assumption}

We now  give the main result of this section, which states a weak error rate of $H+\frac 1 2$ (for the same special cases as in Theorem \ref{thm:main}).

\begin{theorem} \label{thm:hybrid}
Let $\widecheck{W}_t = \int_0^t \widecheck{K}(t,s) dW_s$, where $\widecheck{K}$ satisfies Assumption \ref{asn:cK}. Further assume that either~:

(1) $f(x)=x$ and $\Phi$ is a $C^{2\lceil1/4H \rceil+1}_b$ function,

or

(2) $f$ is $C^3_b$ and $\Phi$ is a cubic polynomial.

Then there exists a constant $C$, which does not depend on $n$, such that 
\[ \left|\ccE_{\Phi}\right|  \leq C \left(\frac{1}{n}\right)^{H+1/2 }. \]

 \end{theorem}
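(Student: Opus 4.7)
The proof parallels the two proofs of Theorem \ref{thm:main}, with the hybrid kernel $\check{K}$ replacing $K'$ throughout. Writing $\Delta K = \check{K} - K$, $K^\theta = (1-\theta) K + \theta \check{K}$, and defining $\ccI^\theta$ and $\tilde{W}^\theta$ in complete analogy with Section \ref{sec:2}, Malliavin integration by parts gives the same decomposition \eqref{eq:EPsi} with $\check{K}$ in place of $K'$. The first task is to establish analogs of the estimates in Lemma \ref{lem:inequalities}. The pointwise bound $|\Delta K(t,s)| \lesssim h(t-s)^{H-3/2}$ for $t-s > \kappa h$, an immediate consequence of \eqref{eq:cK2}, yields $\int_0^t \Delta K(t,s)^2 \, ds \lesssim n^{-2H}$ and $\int_0^t |\Delta K(t,s)|(t-s)^\alpha \, ds \lesssim n^{-H-1/2-\alpha}$, exactly as in the Cholesky case. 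The crucial new input is the moment-matching identity \eqref{eq:matching}, which forces
\[ \int_0^t \left( K(t,s)^2-\check{K}(t,s)^2 \right) ds = \frac{1}{2H}\left( t^{2H}-\eta(t)^{2H} \right) \lesssim n^{-1} t^{2H-1}; \]
without this, the quadratic integral would only be $O(n^{-2H})$, which is precisely the phenomenon behind Figure \ref{fig:quadratic hybrid}. Lemma \ref{lem:CorrelPhi} must also be transferred to the hybrid setting, which I expect to be routine since its proof relies only on such kernel-type estimates and on the $2H$-H\"older regularity of the underlying Gaussian covariance.

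For Case (1), I would prove by induction on odd $k$ the generalized claim: for $\Phi \in C^{k+1}_b$ and $\theta \in [0,1/2]$,
\[ \E[\Phi(\ccI^\theta)] - \E[\Phi(\ccI^{1-\theta})] = O(n^{-(H+1/2)}) + O(n^{-kH}). \]
The base case $k=1$ follows from the strong error $\|\cI-\ccI\|_{L^2} \lesssim n^{-H}$ together with $\Phi$ Lipschitz. For the induction step, I split the left-hand side into $\mathfrak{E}_1 + \mathfrak{E}_2$ exactly as in Section 2.1. The first piece is $O(n^{-(3H+1/2)\wedge 1})$ by (the hybrid analog of) Lemma \ref{lem:CorrelPhi}, which is far more than needed. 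For $\mathfrak{E}_2 = \alpha \int_0^1\int_0^t \Delta(K^2)(t,s)\, dt\, ds + \beta \int_0^1\int_0^t \Delta K(t,s)^2\, dt\, ds$, the moment-matching estimate above gives $\alpha \cdot O(n^{-1})$, absorbed into $O(n^{-(H+1/2)})$ since $H+1/2 \leq 1$; and applying the induction hypothesis to $\Phi''$ with parameter $k-2$ yields $|\beta| \lesssim n^{-(H+1/2)} + n^{-(k-2)H}$, so $|\beta| \cdot n^{-2H} = O(n^{-(3H+1/2)}) + O(n^{-kH}) \leq O(n^{-(H+1/2)}) + O(n^{-kH})$. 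Choosing $k = 2\lceil 1/(4H) \rceil + 1$ gives $kH \geq H+1/2$ and matches the stated regularity assumption.

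Case (2) mimics Section 2.2. Using It\^o's formula to expand $\E[\ccI^3]$ as a double integral and splitting the weak error as $\check{\cE}_{x^3} = \check{\cE}^{(1)} + \check{\cE}^{(2)}$, with $\check{\cE}^{(1)} = \int_0^1\int_0^t \phi_f(t,s) \Delta K(t,s)\,dt\,ds$ (kernel difference) and $\check{\cE}^{(2)} = \int_0^1\int_0^t (\phi_f(t,s) - \check{\phi}_f(t,s)) \check{K}(t,s)\,dt\,ds$ with $\check{\phi}_f(t,s) := \E[f(\check{W}_s)(ff')(\check{W}_t)]$ (process difference), one controls both via Lemma \ref{lem:Phi} and its natural hybrid analog (which uses the strong bound $\|\hat{W}_t - \check{W}_t\|_{L^2} \lesssim n^{-H}$) together with the $\Delta K$ estimates above; the resulting bound is $O(n^{-(H+1/2)})$ directly, without induction. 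The main obstacle, in my view, is the rigorous adaptation of Lemma \ref{lem:CorrelPhi} to $\check{K}$: one must recover the same Gaussian regularity statements for $\tilde{W}^\theta$ and verify that the auxiliary function $\ce$ still satisfies $\int_0^1 \ce(t)^2\, dt \lesssim n^{-2H}$ and $\int_0^1\int_0^t |\Delta K(t,s)|\ce(t)\ce(s)\, dt\, ds \lesssim n^{-3H-1/2}$, which should draw on both the pointwise $\Delta K$ bound and the moment-matching identity.
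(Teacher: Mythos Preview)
Your Case (1) is largely on track, but one claim is wrong and it is precisely the point that distinguishes the two schemes. You assert that $\mathfrak{E}_1 = O(n^{-(3H+1/2)\wedge 1})$ ``by (the hybrid analog of) Lemma \ref{lem:CorrelPhi}, which is far more than needed''. This fails: in the Cholesky proof that bound came from the sharper estimate $\left|\int_0^t \Delta K(t,s)\,ds\right| \lesssim n^{-1}t^{H-1/2}$ in \eqref{eq:ineq2}, obtained by computing the antiderivative exactly. For the hybrid kernel, moment matching \eqref{eq:matching} fixes $\int \Delta(K^2)$ but \emph{not} $\int \Delta K$; as Remark \ref{rmk:constant} makes explicit, $\int_0^t \Delta K(t,s)\,ds$ is genuinely of order $n^{-H-1/2}$, not $n^{-1}$. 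So $\mathfrak{E}_1$ is only $O(n^{-H-1/2})$ --- still enough, but exactly the target, not ``far more''. Correspondingly, the paper's proof uses only the crude first bound $|C_\Psi^\theta(s,t)| \lesssim (1+\ce(t))(1+\ce(s))$ from Lemma \ref{lem:CorrelPhi}, not the H\"older increment estimate; the ``main obstacle'' you flag is therefore less demanding here than in the Cholesky case, not more.

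Case (2) has a genuine gap. Your sketch controls $\check{\cE}^{(2)} = \int\!\!\int (\phi_f - \check\phi_f)\,\check K$ via ``the strong bound $\|\hat W_t - \check W_t\|_{L^2}\lesssim n^{-H}$'', but that only yields $|\phi_f(t,s)-\check\phi_f(t,s)|\lesssim n^{-H}$ uniformly, hence $\check\cE^{(2)} = O(n^{-H})$ after integrating against $\check K$ --- short of the required $n^{-H-1/2}$. What is actually needed is the pointwise covariance estimate of Lemma \ref{lem:PhiHybrid},
\[ |\phi(t,s)-\check\phi(t,s)| \;\lesssim\; n^{-1}(t-s)^{2H-1}t^{-H} \;+\; n^{-H-1/2}(t-s)^{H-1/2}t^{-H}, \]
for grid-points $t,s$ with $t-s\geq\kappa h$. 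This is proved by comparing $C(t,s)$ and $\check C(t,s)$ directly (using \eqref{eq:cK2} and the identity $\check C(t,t)=C(t,t)$ from \eqref{eq:matching}) and then invoking the Lipschitz property of Lemma \ref{lem:Z12}; the extra factor $(t-s)^{H-1/2}$ is exactly what, when integrated against $K(t,s)=(t-s)^{H-1/2}$, recovers the missing half-power. The paper also routes through Theorem \ref{thm:main}(2) first, comparing $\check\cI$ with the Cholesky $\cI'$ rather than with $\cI$ directly, so that all evaluations of $\phi_f,\check\phi_f$ are at grid-points and Lemma \ref{lem:PhiHybrid} applies cleanly.
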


\begin{remark}  \label{rmk:constant}
(1) Let us explain the difference in rates between hybrid scheme and exact discretization. It comes from the fact that it is not possible to choose the weights to match both second and first moments of the kernel. The second moment has to be matched (in order to avoid rate $2H$), but then \eqref{eq:ineq2} does not hold. It is indeed easy to check that one has instead
\[ \int_{0}^t \Delta K(t,s) ds  \sim_{n \to \infty} n^{-H-1/2} t^{H+1/2} C_{\kappa,H},\]
with
\[ C_{\kappa,H} = \sum_{k=\kappa}^\infty \left[ \left( \frac{(k + 1)^{2H} - k^{2H}}{2H} \right)^{1/2} - \frac{(k+1)^{H+1/2} - k^{H+1/2}}{H+1/2}   \right]. \]
Inspecting the proof of Theorem \ref{thm:main} (1), we can then expect a leading order term $\widehat{C}^\Phi_{\kappa,H} n^{-H-1/2}$, with
\begin{equation} \label{eq:widehatC}
 \widehat{C}^\Phi_{\kappa,H} = C_{\kappa,H} \int_0^1 \E\left[ \Phi^{(3)}(\cI) \widehat{W}_t^2 \right] t^{H+1/2} dt.
 \end{equation}

(2) We see from Theorems \ref{thm:main} and \ref{thm:hybrid} that the Cholesky scheme has a higher (weak) convergence rate than the hybrid scheme. However, it also has a higher computational cost ($O(n^2)$ vs $O(n \log(n))$).  Given an error tolerance level of order $\varepsilon$, we can compute the required computational costs for both schemes and obtain (ignoring logarithmic terms)
\[Cost_{Hybrid} = O(\varepsilon^{- \frac{1}{H+1/2}}) , \;\;\;\; Cost_{Cholesky} =O(\varepsilon^{- \frac{2}{(3H+1/2) \wedge 1)}}).
 \]
This yields in particular, the hybrid scheme is always (asymptotically) less costly, for any $H \in (0,1/2)$.

(3) In fact, it is not clear that the difference in the asymptotic rates between Theorems \ref{thm:main} and \ref{thm:hybrid} is  relevant in practice. Indeed, the constant $\widehat{C}^\Phi_{\kappa,H}$ from \eqref{eq:widehatC} is typically rather small
, at least compared to the loss of $n^{-2H}$. For example, for $\kappa=1$, $H=0.1$ and $\Phi(x)=x^3/6$ one has $\widehat{C}^\Phi_{\kappa,H}\approx 0.012$. Then one can check that $\widehat{C}^\Phi_{\kappa,H} n^{-H-1/2}$ is only bigger than the error of the Cholesky scheme $\approx \widetilde{C} n^{-3H-1/2}$ ($\widetilde{C} \approx 3$ being estimated from numerical values) for $n$ of order $10^{12}$, which is much higher than the discretization sizes used in practical situations.
\end{remark}

\begin{remark} 
The kernel of the hybrid scheme is piecewise constant (away from the singularity), but our Assumption \ref{asn:cK} covers more general approximations. In particular, our result also applies to Fukasawa and Hirano's 3R scheme \cite{FH21}, where the chosen approximation is of the form
\[ k(r) \approx \alpha_k +  \beta_k (r  -  (k-\kappa+1) h)^{H-1/2}, \;\;\;\;\; r \in [kh, (k+1) h),\;\;\;\;\; k \geq \kappa.\]
Our result then gives an asymptotic weak rate $H+\frac{1}{2}$ in this case as well, assuming that the $\alpha_k$ and $\beta_k$ are chosen to match the second moments of $k$ (and satisfy the technical condition \eqref{eq:cK2}). (Note that since the focus of \cite{FH21} was on reducing the mean square error, their weights were chosen differently). 

We can also expect that the method of proof can be applied to even more general approximations for the fractional kernel, such as the multi-factor  Markovian approximations \cite{CC98} (however in that case \eqref{eq:cK2} cannot be satisfied, so that some arguments would need to be modified). 
\end{remark}

In Figure \ref{fig:cubic hybrid} below, we plot the error for $\Phi(x)=x^3/6$. This confirms the observation made in Remark \ref{rmk:constant} (3), in that in that case, for realistic step-sizes, the hybrid scheme (with weights matching the 2nd moment) gives results which are indistiguishable from Cholesky discretization.

\begin{figure}[!h]
\includegraphics[width=400bp]{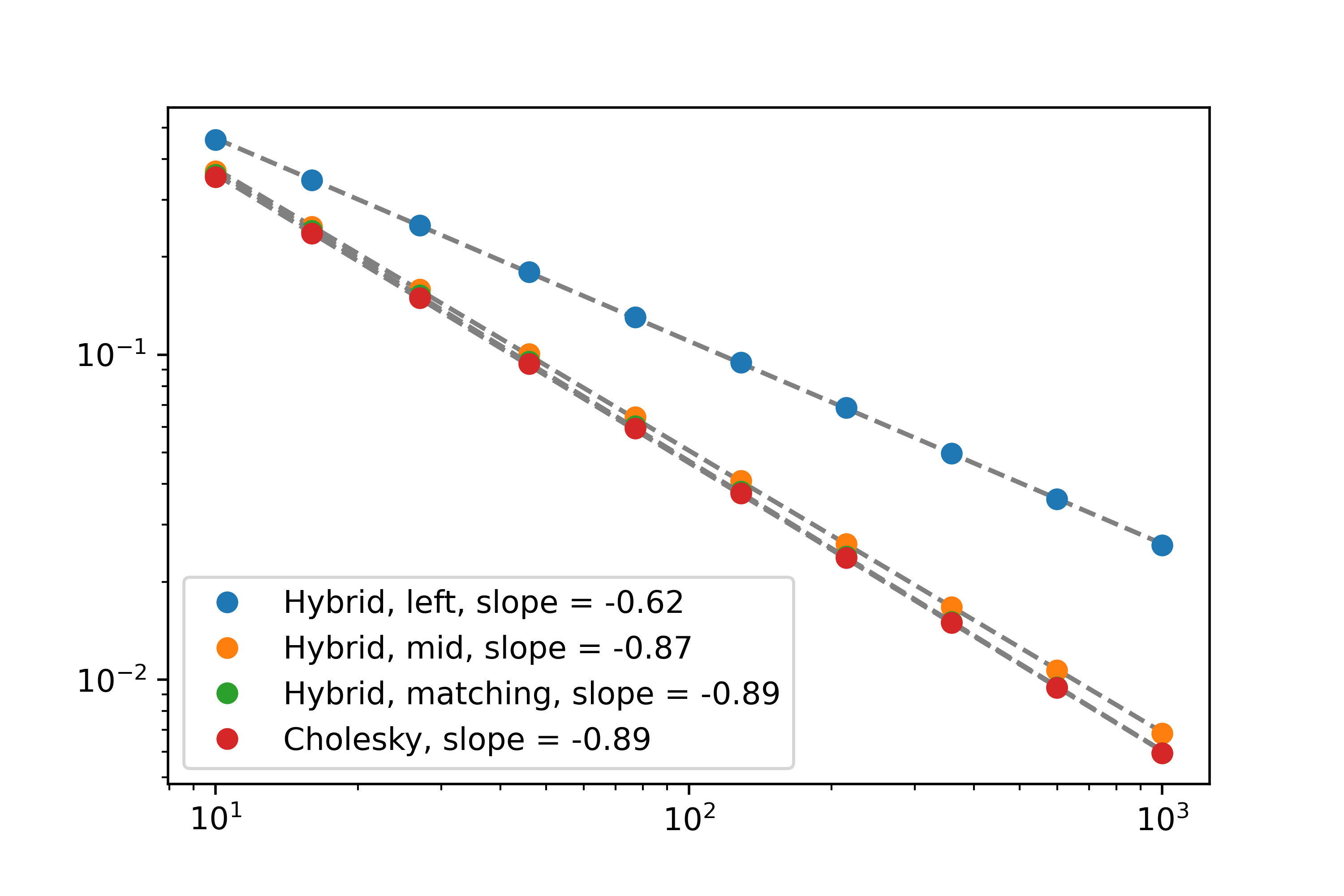}
\caption{Plot of $\cE_{x^3/6}$ when $f(x)=x$ and $H=0.15$ as a function of $n$, for Cholesky and Hybrid ($\kappa=1$) schemes with various choices of weights. The dotted lines are linear regressions, with slopes indicated in the legend.}
\label{fig:cubic hybrid}
\end{figure}

\subsection{Proof of Theorem \ref{thm:hybrid}}

\subsubsection{Proof of case (1)}

We keep the same notations as in the proof of Theorem \ref{thm:main} (1), with now $K'(t,s) = \widecheck{K}(\eta(t),s)$. The proof is essentially the same and in fact, simpler, since we only need to obtain rate $H+\frac{1}{2}$, so we only sketch it. The first three inequalities in Lemma \ref{lem:inequalities} still hold. From Lemma \ref{lem:CorrelPhi} we now only need the fact that
\[ \E\left[ (\widetilde{W}_t^\theta)^2 \right] \lesssim 1 +\ce(t)^2, \mbox{ where } \int_0^1 dt  \int_0^t ds \;(1+ \ce(t))(1+\ce(s)) |\Delta K(t,s)| \lesssim n^{-2H}\]
which is again proved similarly (the function $\ce(t)$ is the same).

In the induction step, we now prove a bound of order $O\left(n^{-H-1/2}\right) + O\left(n^{-kH}\right)$ for $C^k_b$ test functions $\Phi$. This follows from estimating $\mathfrak{E}_1$ by
\begin{align*}
& \int_0^1 dt \int_0^t ds \; \Delta K(t,s) \E\left[ \Phi^{(3)}\left( \cI^\gamma\right) \widetilde{W}^\gamma_s \widetilde{W}^\gamma_t  \right] \\
\lesssim& \int_0^1 dt\; (1+ \ce(t))\int_0^t ds\; (1+\ce(s)) \left|\Delta K (t,s)\right|  \\
\lesssim &\; n^{-H-1/2}.
\end{align*}
while the estimate for $\mathfrak{E}_2$, which uses \eqref{eq:strong} and \eqref{eq:wrquad}, remains the same.

\subsubsection{ Proof of case (2)}

Let $\Phi(x)=x^3$. By the result of Theorem \ref{thm:main} (2), it is enough to compare $\E [(\ccI)^3] - \E [(\cI')^3]$, where $\cI' =\int_0^1 f(\widehat{W}_{\eta(t)}) dW_t$. 
By the same argument as in the proof of Theorem \ref{thm:main} (2), we write this difference as $\cE^{(1)}$ + $\cE^{(2)}$, where
\[ \cE^{(1)} = \int_0^1 dt \int_0^t ds\; \check{\phi}_f(\eta(t),\eta(s)) \left( K(\eta(t),s) - \widecheck{K}(\eta(t),s) \right)  \]
and 
\[ \cE^{(2)} =  \int_0^1 dt \int_0^t ds \;\left( \phi_f(\eta(t),\eta(s)) - \check{\phi}_f(\eta(t),\eta(s)) \right) K(t,s) , \]
where
\[ \phi_f(t,s) = \E \left[ f(\widehat{W}_s) (ff')(\widehat{W}_t) \right], \;\;\; \check{\phi}_f(t,s) = \E \left[ f(\widecheck{W}_s) (ff')(\widecheck{W}_t) \right]. \]

The first term is simple, since we can used boundedness of $\phi_f$ to obtain
\begin{align*}
 \cE^{(1)}& \lesssim \int_0^1 dt \int_0^t ds \; \left| K(\eta(t),s) - \widecheck{K}(\eta(t),s) \right|  \\
 & \lesssim \int_0^1 dt \;\int_0^{(\eta(t)-\kappa h)_+}  n^{-1} (t-s)^{H-3/2} ds \\
 &\lesssim \; n^{- H-1/2}.
 \end{align*}
 
For the second term, we use the following lemma (the proof being deferred to the Appendix).
 
\begin{lemma} \label{lem:PhiHybrid}
Let $\phi(t,s) = \E \left[ \psi(\widehat{W}_s, \widehat{W}_t)\right]$,   $\check{\phi}(t,s) = \E \left[ \psi(\widecheck{W}_s, \widecheck{W}_t)\right]$ for $\psi:\R^2 \to \R$ such that $\psi, \partial_1 \psi, \partial^2_1 \psi$ and $\partial_2 \psi$ are bounded, where $\widehat{W}$ is the Riemann-Liouville fBm, and $ \widecheck{W}_t =\int_0^t \widecheck{K}(t,s) dW_s$ where $\widecheck{K}$ satisfies Assumption \ref{asn:cK}. It then holds that for all $s\leq t$ in $[0,1]$ such that $t=\eta(t)$, $s=\eta(s)$ and $t-s \geq \kappa h$,
\begin{equation} \label{eq:ineqphiH}
 \left| \phi(t,s) - \check{\phi}(t,s)  \right| \lesssim n^{-1} (t-s)^{2H-1} t^{-H} + n^{-H-1/2} (t-s)^{H-1/2} t^{-H} . 
\end{equation}
 \end{lemma}
 
 We continue with the proof of the Theorem. The contribution to $\cE^{(2)}$ of close points $t$ and $s$ is simple to bound, using boundedness of $\phi$ and $\check{\phi}$, since
 \[ \int_{0 \leq t-s \leq (\kappa +1) h}  dt \;ds \;  (t-s)^{H-1/2} \lesssim n^{-H-1/2}. \]
 It remains to estimate the contribution of $t,s$ with $t-s \geq (\kappa +1) h$, for which we use Lemma \ref{lem:PhiHybrid} to obtain the bound
 \begin{align*}
  & \;\;\; \int_0^1 dt \; \int_0^{(t- (\kappa+1) h)_+} ds \left( \phi_f(\eta(t),\eta(s)) - \check{\phi}_f(\eta(t),\eta(s)) \right)  (t-s)^{H-1/2} \\
&\lesssim \int_0^1 dt \; t^{-H} \; n^{-1} \int_0^{(t- (\kappa+1) h)_+} ds \;(t-s)^{3H-3/2}   + \int_0^1 dt \;t^{-H} \;  n^{-H-1/2}\int_0^{(t- (\kappa+1) h)_+} ds \; (t-s)^{2H-1} \\
&\lesssim \;\; \left(\int_0^1 dt  \; t^{-H}\right) n^{-1} n^{(-3H+1/2) \vee 0}  \; + \left(\int_0^1 dt  \; t^{-H}\right) n^{-H-1/2}  \\
& \lesssim \;\;\;\;\;\;\;\;\;\; n^{- (3H+1/2) \wedge 1}  + n^{-H-1/2}.
 \end{align*}

\section{Technical proofs} \label{sec:4}

\subsection{Proof of Lemma \ref{lem:CorrelPhi}} \label{subsec:41}

We let $ \ep(t) = \ \lceil nt \rceil/n  \geq t$ and define
\begin{equation} \label{eq:defE}
\ce(t) = h^{1/2} \left( \ep(t) - t \right)^{H-1/2}
\end{equation}

\begin{lemma}
It holds that
\begin{equation*}
\int_0^1 dt \;\ce(t)^2 dt \;\lesssim n^{-2H},
\end{equation*}
and 
\begin{equation*}
\int_0^1 dt \int_0^t ds\; |\Delta K(t,s)| \ce(t) \ce(s) \lesssim n^{-3H-1/2}.
\end{equation*}
\end{lemma}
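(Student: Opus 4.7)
The first inequality will be handled by a routine direct calculation. On each cell $I_k := [k/n, (k+1)/n)$ one has $\ep(t) = (k+1)/n$, so $\ce(t)^2 = h((k+1)/n - t)^{2H-1}$; integrating over $I_k$ and summing the $n$ contributions will yield $\int_0^1 \ce(t)^2 dt = (2H)^{-1} n^{-2H}$.

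The second inequality is where the real work lies. A direct Cauchy--Schwarz combining $\int_0^t |\Delta K|^2 ds \lesssim n^{-2H}$ with the first inequality would only produce $n^{-3H}$, short of the target by a factor of $n^{-1/2}$. This loss occurs because the singularities of $|\Delta K(t,s)|$ (peaked at $s = \eta(t)$, the bottom of the cell containing $t$) and of $\ce(s)$ (peaked at $s = \ep(s)$, the top of its cell) overlap precisely on the diagonal cells $I_k \times I_k$. My plan is therefore to decompose the integral cell-by-cell, parametrizing $t = k/n + u$, $s = j/n + v$ with $u, v \in [0, 1/n)$, and tracking $m := k - j \geq 0$.

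On the diagonal cells ($m = 0$, $v \leq u$), $\Delta K(t,s) = -(u - v)^{H-1/2}$, and all three factors $(1/n-u)^{H-1/2}$, $(1/n-v)^{H-1/2}$, $(u-v)^{H-1/2}$ are present simultaneously. The substitution $u = \alpha/n$, $v = \beta/n$ will extract exactly the factor $n^{-3H-3/2}$ per cell, multiplied by the constant
\[ C_H := \int_0^1 d\alpha \int_0^\alpha d\beta \, (1-\alpha)^{H-1/2}(1-\beta)^{H-1/2}(\alpha-\beta)^{H-1/2}. \]
I will verify $C_H < \infty$ for $H \in (0,1/2)$ by checking the only dangerous region, the corner $(\alpha,\beta) \to (1,1)$: setting $x = 1-\alpha$, $y = \alpha - \beta$, polar coordinates give a radial integrand $\sim r^{3H - 1/2}$, which is integrable. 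Summing $C_H n^{-3H-3/2}$ over the $n$ diagonal cells will give the desired $n^{-3H-1/2}$.

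The off-diagonal cells will be subdominant. For far cells ($m \geq 2$), the mean-value bound $|\Delta K(t,s)| \lesssim h(t-s)^{H-3/2} \lesssim h \, m^{H-3/2} n^{3/2 - H}$ (valid since $t - s \geq (m-1)/n \geq m/(2n)$) produces a per-cell contribution $\lesssim m^{H-3/2} n^{-3H-3/2}$; since $\sum_{m \geq 2} m^{H-3/2}$ converges (as $H < 1/2$) and there are $n$ choices of $k$, the total is again $\lesssim n^{-3H-1/2}$. The adjacent cells ($m=1$) are handled by the crude bound $|\Delta K| \leq (1/n - v)^{H-1/2}$, which gives a strictly subdominant $n^{-3H-3/2}$. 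The main obstacle is recognizing the need for this cell-wise decomposition: every single-scale estimate loses the factor $n^{-1/2}$, and one must exploit the precise alignment of the two singularities along the diagonal to recover the correct rate. Once this is identified, the remaining steps are elementary.
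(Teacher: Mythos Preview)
Your argument is correct. The first inequality is immediate, and for the second your two-variable cell decomposition with separate treatment of the diagonal, adjacent, and far cells goes through as written. One small wording quibble: the $m=1$ contribution is not ``strictly subdominant''---after summing over the $n$ adjacent pairs it is of the same order $n^{-3H-1/2}$ as the diagonal and far terms---but this does not affect the bound.

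The paper proceeds differently and somewhat more economically. Rather than decomposing in both variables, it first factors out the $t$-dependence crudely via
\[
\int_0^1 dt \int_0^t ds\, |\Delta K(t,s)|\,\ce(t)\,\ce(s) \;\leq\; \Bigl(\int_0^1 \ce(t)\,dt\Bigr)\cdot \sup_{t}\int_0^t |\Delta K(t,s)|\,\ce(s)\,ds,
\]
uses $\int_0^1 \ce(t)\,dt \lesssim n^{-H}$, and then shows the one-variable uniform bound $\sup_t \int_0^t |\Delta K(t,s)|\ce(s)\,ds \lesssim n^{-2H-1/2}$ by splitting only according to $t-s \leq h$ versus $t-s \geq h$. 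The near-diagonal part is handled by the crude pointwise bound $|\Delta K|\leq K+K'$, and the far part by the same mean-value estimate $|\Delta K|\lesssim h(t-s)^{H-3/2}$ you use. So the paper's key observation is that one may be extremely wasteful in the $t$-variable (replacing $\ce(t)$ by its $L^1$ norm) and still recover the sharp rate; this sidesteps the need to track the joint singularity at the corner $(\alpha,\beta)\to(1,1)$ that your diagonal analysis isolates. Your approach is more explicit about where the mass sits and would be the natural route if one wanted a sharp constant; the paper's is shorter.
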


\begin{proof}
The first assertion is immediate. We now prove the second assertion. Since $\int_0^1 \ce(t) dt \lesssim n^{-H}$, it suffices to show that
\[ \sup_{0\leq t \leq 1} \int_0^t |\Delta K(t,s)| \ce(s) ds \lesssim n^{-2H-1/2}.\]
 We separate the integral in two terms depending on whether $t-s\geq h$ or $t-s\leq h$. The first term is
\begin{align*}
\int_0^{t-h} ds \;|\Delta K(t,s)|  \ce(s) &\lesssim h   \int_0^{t-h} (t-s)^{H-3/2}\ce(s) ds  \\
&\leq  h  \sum_{k\geq 1} (kh)^{H-3/2} (\int_0^h \ce(s) ds) \\
&\lesssim h^{2H+1/2} ,
\end{align*}
and the second one is bounded by
\begin{align*}
\int_{t-h}^t ds  \;\left( (t-s)^{H-1/2} + (\eta(t)-s)_+^{H-1/2} \right)\left((\eta(t)-s)_+^{H-1/2} + (\ep(t)-s)_+^{H-1/2}\right) h^{1/2}  \lesssim h^{2H+1/2}.
\end{align*}
\end{proof}

We consider the case $\theta=0$, i.e. $\widetilde{W} = \widehat{W} + \overline{W}$  which has corresponding kernel $\widetilde{K}(s,t)= K(t,s) + K(s,t) = |s-t|^{H-1/2}$. Similarly define $\Delta \widetilde{W}$, $\Delta \widetilde{K}$. We then have the following estimates.

\begin{lemma} \label{lem:Cor0}
It holds that for all $0 \leq s \leq t$,
\begin{equation} \label{eq:Correl0}
\E \left[ \widetilde{W}_t \left( \widetilde{W}_t - \widetilde{W}_s \right) \right] \lesssim (t-s)^{2H},
\end{equation}
\begin{equation} \label{eq:CorrelDelta}
\E \left[ \Delta\widetilde{W}_t^2 \right] \lesssim n^{-2H} + \ce(t)^2,
\end{equation}
and for all $s\leq t$,
\begin{equation} \label{eq:CorrelDelta2}
\E \left[ \widetilde{W}_t \left( \Delta \widetilde{W}_t -  \Delta \widetilde{W}_s \right) \right]  \lesssim h^{2H} + |t-s|^{2H} +  h^{H} \left(\ce(t) + \ce(s)\right).
\end{equation}
\end{lemma}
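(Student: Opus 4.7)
For \eqref{eq:Correl0}, I would apply the polarization identity
\[ \E[\tilde{W}_t (\tilde{W}_t - \tilde{W}_s)] = \tfrac{1}{2}\bigl(\E[\tilde{W}_t^2] - \E[\tilde{W}_s^2]\bigr) + \tfrac{1}{2} \E[(\tilde{W}_t - \tilde{W}_s)^2]. \]
Since $\E[\tilde{W}_t^2] = \int_0^1 |r-t|^{2H-1} dr = (t^{2H} + (1-t)^{2H})/(2H)$, the difference of variances is $O((t-s)^{2H})$ by subadditivity of $x \mapsto x^{2H}$ on $\R_+$ (valid for $2H < 1$). The increment variance $\int_0^1 (|r-t|^{H-1/2} - |r-s|^{H-1/2})^2 dr$ is of the same order, by splitting into $r \in [s,t]$ (direct scaling) and the tails (mean-value bound $\bigl||r-t|^{H-1/2} - |r-s|^{H-1/2}\bigr| \lesssim (t-s)\, \mathrm{dist}(r, [s,t])^{H-3/2}$).

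For \eqref{eq:CorrelDelta}, I would compute $\E[\Delta \tilde{W}_t^2] = \int_0^1 \Delta \tilde{K}(t, r)^2 dr$ by analyzing the kernel piecewise. In the bulk regions ($r < \eta(t) - h$ or $r > \ep(t) + h$), the mean-value argument underlying \eqref{eq:strong} gives $|\Delta \tilde{K}(t, r)| \lesssim n^{-1} |r-t|^{H-3/2}$, contributing $O(n^{-2H})$. The short intervals near $\eta(t)$ and $t$, together with the gap $[\eta(t), \ep(t)]$, contribute $O(h^{2H})$ by direct integration of $|r-t|^{2H-1}$. Finally, in the critical region $r \in [\ep(t), \ep(t) + h]$ we have $\Delta \tilde{K}(t, r) = (\ep(t) - t)^{H-1/2} - (r-t)^{H-1/2}$, bounded in absolute value by $(\ep(t) - t)^{H-1/2}$, contributing $h (\ep(t) - t)^{2H-1} = \ce(t)^2$.

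The subtle inequality is \eqref{eq:CorrelDelta2}: since $\ce(t) \geq h^H$ always (indeed $\ce(t) = h^{1/2}(\ep(t)-t)^{H-1/2} \geq h^{1/2}\, h^{H-1/2}$), the stated $h^H \ce$ bound is strictly sharper than the $\ce^2$ bound that a naive Cauchy--Schwarz through \eqref{eq:CorrelDelta} would produce. Writing
\[ \E[\tilde{W}_t (\Delta \tilde{W}_t - \Delta \tilde{W}_s)] = \int_0^1 \tilde{K}(t, r) \bigl(\Delta \tilde{K}(t, r) - \Delta \tilde{K}(s, r)\bigr) dr, \]
the plan is to exploit signed cancellation in the critical region: a direct computation
\[ \int_{\ep(t)}^{\ep(t)+h} (r-t)^{H-1/2}\bigl[(\ep(t)-t)^{H-1/2} - (r-t)^{H-1/2}\bigr] dr \lesssim h^{H+1/2}(\ep(t)-t)^{H-1/2} = h^H \ce(t) \]
shows (via the explicit antiderivatives) that the two individually $\ce(t)^2$-sized pieces nearly cancel. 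A symmetric estimate near $r \approx \ep(s)$ gives $h^H \ce(s)$; the bulk contributions are controlled by a second-order mean-value estimate on $\Delta \tilde K(t, r) - \Delta \tilde K(s, r)$ giving $O((t-s)^{2H})$; and the remaining short intervals of size $O(h)$ contribute $O(h^{2H})$. The main obstacle is the case analysis required when $|t-s|$ is of order $h$, so that the singular regions near $\ep(s), \ep(t)$ and the singularity of $\tilde{K}(t, \cdot)$ at $r = t$ overlap: several sub-cases (e.g.\ whether $\ep(s) = \eta(t)$, or $r = \ep(t)$ falls inside the support of $\Delta \tilde K(s, \cdot)$) must be checked separately to confirm that the cross term $\int \tilde{K}(t,r) \Delta \tilde{K}(s,r) dr$ remains of the stated order uniformly.
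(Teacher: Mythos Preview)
Your region-by-region strategy is correct and matches the paper's proof. Two comments on presentation and one correction on the mechanism for \eqref{eq:CorrelDelta2}.

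For \eqref{eq:Correl0}, the paper bypasses polarization entirely: the substitution $r = s + (t-s)u$ turns $\int_0^1 |t-r|^{H-1/2}\bigl(|t-r|^{H-1/2}-|s-r|^{H-1/2}\bigr)dr$ into $(t-s)^{2H}$ times an integral in $u$ that is bounded uniformly by $\int_{\R}|1-u|^{H-1/2}\bigl||1-u|^{H-1/2}-|u|^{H-1/2}\bigr|du<\infty$. Your polarization route is fine, but note that your mean-value tail bound $(t-s)\,\mathrm{dist}(r,[s,t])^{H-3/2}$ is not square-integrable down to $r=t$; you need to treat $r\in[t,2t-s]$ separately by the crude bound before invoking the mean-value estimate on $r>2t-s$.

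For \eqref{eq:CorrelDelta2}, your diagnosis that ``signed cancellation'' is the key is a misreading of the mechanism, and it leads you to anticipate more case analysis than is actually required. In the critical window $r\in[\ep(t),\ep(t)+h]$ the two pieces of your integral are \emph{not} individually of size $\ce(t)^2$: the first one is
\[
(\ep(t)-t)^{H-1/2}\int_{\ep(t)}^{\ep(t)+h}(r-t)^{H-1/2}\,dr
\;\lesssim\; (\ep(t)-t)^{H-1/2}\,h^{H+1/2}
\;=\; h^{H}\ce(t)
\]
directly, by subadditivity of $x\mapsto x^{H+1/2}$, and the second is $O(h^{2H})$. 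No sign is used. Accordingly, the paper discards all signs in the near-singular zones: on $\{|t-r|\le 2h\}\cup\{|s-r|\le 2h\}$ it simply bounds $|\Delta\tilde K(t,r)-\Delta\tilde K(s,r)|\le \tilde K(t,r)+\tilde K'(t,r)+\tilde K(s,r)+\tilde K'(s,r)$ and integrates against $|t-r|^{H-1/2}$; the only terms that produce $\ce$ are the pieces $(\eta(r)-t)_+^{H-1/2}$ and $(\eta(r)-s)_+^{H-1/2}$ of $\tilde K'$, and the displayed computation above handles them. The far regions $\{r\ge t+2h\}$, $\{r\le s-2h\}$, $\{s+h\le r\le t-h\}$ are then dealt with by first-order mean-value bounds $|\Delta\tilde K(t,r)-\Delta\tilde K(s,r)|\lesssim (|t-s|+h)\,\mathrm{dist}(r,\{s,t\})^{H-3/2}$, giving $h^{2H}+|t-s|^{2H}$. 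In particular the overlap sub-cases you worried about ($\ep(s)=\eta(t)$, etc.) are absorbed automatically once you use the triangle inequality rather than track the difference.
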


\begin{proof}
\eqref{eq:Correl0} follows from
\[\int_0^1 |t-r|^{H-1/2} \left(|t-r|^{H-1/2} - |s-r|^{H-1/2} \right) dr \leq (t-s)^{2H} \int_{-\infty}^\infty |1-u|^{H-1/2}  \left(|1-u|^{H-1/2} - |u|^{H-1/2} \right) du \]
using the change of variables $r = (t-s) u$.

We now show \eqref{eq:CorrelDelta}. It holds that
\begin{align*}
\E \left[ \Delta\widetilde{W}_t^2 \right] = \int_0^1 dr  \left(|t-r|^{H-1/2} -(\eta(t)-r)^{H-1/2}_+ - (\eta(r)-t)^{H-1/2}_+\right)^2
\end{align*}
We then split this integral into two contributions.

On the points where $|r-t|\geq 2h$, the integrand is bounded by a multiple of $h^2 |t-r|^{2H-3}$, for a total contribution of order $h^{2H}$.

For the points where $|r-t|\leq 2h$, we bound the integral by a multiple of
\begin{align*}
\int_{t-2h}^{t+2h}  dr  \left( |t-r|^{2H-1} + | \eta(t)-r|^{2H-1} \right)+ \int_{t}^{t+2h} (\eta(r)-t)^{2H-1}_+ dr \lesssim h^{2H} + h  \left( \ep(t) - t \right)^{2H-1}.
\end{align*}

We now treat \eqref{eq:CorrelDelta2}, for which one has
\begin{align*}
\E \left[ \widetilde{W}_t \left( \Delta \widetilde{W}_t -  \Delta \widetilde{W}_s \right) \right] & = \int_0^1 dr \; |t-r|^{H-1/2} \left(\Delta \widetilde{K}(t,r) - \Delta \widetilde{K}(s,r)\right) dr \\
&\lesssim \int_{|t-r| \leq 2h}dr \; |t-r|^{H-1/2} \left(\widetilde{K}(t,r) + \widetilde{K}'(t,r) + \widetilde{K}(s,r) + \widetilde{K}'(s,r) \right) \\
&+  \int_{|s-r| \leq 2h} dr\; |t-r|^{H-1/2} \left(\widetilde{K}(t,r) + \widetilde{K}'(t,r) + \widetilde{K}(s,r) + \widetilde{K}'(s,r) \right) \\
&+  \int_{r \geq t+ 2h,}dr\; |t-r|^{H-1/2}\left( |s-t| + h\right) |t-r|^{H-3/2}   \\
&+  \int_{r \leq s- 2h,}dr\;  |t-r|^{H-1/2}\left( |s-t| + h\right) |s-r|^{H-3/2}   \\
&+ \int_{s +h \leq r\leq t-h} dr \; |t-r|^{H-1/2} \left( h |t-r|^{H-3/2}  + h  |s-r|^{H-3/2}\right)
\end{align*}
The first two lines are bounded by $h^{2H} + h^{H} \left(\ce(t) + \ce(s)\right)$,  the third and fourth lines bounded by $h^{2H} + h^{2H-1} |t-s| \lesssim h^{2H} + |t-s|^{2H}$, and the last line by $h^{2H}$.
\end{proof}

\begin{lemma} \label{lem:Cor1}
For any $\theta \in [0,1]$, letting
\begin{equation*}
C_{1}^{\theta} : (s,t) \mapsto \E\left[\widetilde{W}^{\theta}_s  \widetilde{W}^{\theta}_t\right]
\end{equation*}
it holds that
\begin{equation} \label{eq:C1t}
C_1^{\theta}(t,t) \lesssim 1+ \cE(t)^2,
\end{equation}
and
\begin{equation} \label{eq:C1ts}
C_1^{\theta}(t,t) - C_{1}^{\theta}(t,s) \lesssim (t-s)^{2H} + n^{-2H} + \ce(t)^2 + (n^{-H}+ \ce(t))\ce(s).
\end{equation}
\end{lemma}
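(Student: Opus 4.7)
The plan is to reduce both estimates to Lemma \ref{lem:Cor0} by exploiting the affine decomposition $\tilde{W}^\theta = \tilde{W} + \theta\,\Delta\tilde{W}$, which lets every quantity of interest be expanded bilinearly into pieces of the form $\E[\tilde{W}\cdot\tilde{W}]$, $\E[\tilde{W}\cdot\Delta\tilde{W}]$ and $\E[\Delta\tilde{W}\cdot\Delta\tilde{W}]$, each of which is controlled by one of the three estimates already proved.

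For the diagonal bound \eqref{eq:C1t}, I would write $C_1^\theta(t,t) = \E[(\tilde{W}^\theta_t)^2] \leq 2\,\E[\tilde{W}_t^2] + 2\theta^2\,\E[\Delta\tilde{W}_t^2]$. The first expectation equals $\int_0^1 |t-r|^{2H-1} dr \leq 1/H$ uniformly in $t$, since $2H-1 > -1$; the second is bounded by \eqref{eq:CorrelDelta} as $n^{-2H} + \ce(t)^2 \lesssim 1 + \ce(t)^2$, so the claimed inequality is immediate.

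For the off-diagonal bound \eqref{eq:C1ts}, I would expand
\[ C_1^\theta(t,t) - C_1^\theta(t,s) = \E\left[\tilde{W}^\theta_t(\tilde{W}^\theta_t - \tilde{W}^\theta_s)\right] = A + \theta B + \theta C + \theta^2 D, \]
with $A = \E[\tilde{W}_t(\tilde{W}_t - \tilde{W}_s)]$, $B = \E[\tilde{W}_t(\Delta\tilde{W}_t - \Delta\tilde{W}_s)]$, $C = \E[\Delta\tilde{W}_t(\tilde{W}_t - \tilde{W}_s)]$, and $D = \E[\Delta\tilde{W}_t(\Delta\tilde{W}_t - \Delta\tilde{W}_s)]$. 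The term $A$ is handled directly by \eqref{eq:Correl0}, and $B$ by \eqref{eq:CorrelDelta2}, the leftover $n^{-H}\ce(t)$ contribution being absorbed via Young's inequality into $n^{-2H} + \ce(t)^2$. For $C$ and $D$ I would use Cauchy--Schwarz, invoking the auxiliary bound $\E[(\tilde{W}_t - \tilde{W}_s)^2] \lesssim (t-s)^{2H}$ (which follows by the same change of variables $r = s + (t-s)u$ used to prove \eqref{eq:Correl0} applied to $\int_0^1(|t-r|^{H-1/2} - |s-r|^{H-1/2})^2 dr$) and the triangle-inequality bound $\E[(\Delta\tilde{W}_t - \Delta\tilde{W}_s)^2]^{1/2} \lesssim n^{-H} + \ce(t) + \ce(s)$ derived from \eqref{eq:CorrelDelta} at both endpoints. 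This yields $|C| \lesssim (n^{-H}+\ce(t))(t-s)^H$ and $|D|\lesssim (n^{-H}+\ce(t))(n^{-H} + \ce(t) + \ce(s))$; a final Young inequality $ab \leq \tfrac12(a^2+b^2)$ turns the former into $(t-s)^{2H} + n^{-2H} + \ce(t)^2$, and expanding the product in the latter yields $n^{-2H} + \ce(t)^2 + (n^{-H}+\ce(t))\ce(s)$. Summing the four contributions produces exactly the asserted right-hand side. The only mildly non-mechanical step is this use of Young's inequality to trade the mixed bound $(n^{-H}+\ce(t))(t-s)^H$ for squared quantities, which is precisely what produces the symmetric form of the right-hand side; the rest is routine bookkeeping.
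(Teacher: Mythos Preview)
Your proof is correct and follows the same approach as the paper, which simply states that the result is immediate from Lemma~\ref{lem:Cor0} and Cauchy--Schwarz after writing $\tilde{W}^\theta = \tilde{W} + \theta\,\Delta\tilde{W}$. You have merely spelled out the bookkeeping that the paper omits, including the auxiliary bound $\E[(\tilde{W}_t-\tilde{W}_s)^2]\lesssim (t-s)^{2H}$, which indeed follows by the same scaling argument used for \eqref{eq:Correl0}.
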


\begin{proof}
This is immediate from Lemma \ref{lem:Cor0} and the Cauchy-Schwarz inequality, writing $\widetilde{W}^\theta = \widetilde{W} + \theta \Delta W$.
%
  \end{proof}

\begin{lemma} \label{lem:CorrelPhiApp}
For any $\theta \in [0,1]$, and any $C^1_b$ function $\Psi$, the map
\begin{equation*}
C_{\Psi}^\theta : (s,t) \mapsto \E\left[\Psi(\cI^\theta)\widetilde{W}^\theta_s \widetilde{W}^\theta_{t} \right]
\end{equation*}
satisfies the same as above, namely
\begin{equation*}
C_{\Psi}^{\theta}(t,t) \lesssim 1+ \ce(t)^2,
\end{equation*}
and
\begin{equation*}
C_{\Psi}^{\theta}(t,t) - C_{1}^{\theta}(t,s) \lesssim (t-s)^{2H} + n^{-2H} + \ce(t)^2 + \ce(t)\ce(s).
\end{equation*}

\end{lemma}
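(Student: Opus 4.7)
The first bound is immediate from boundedness of $\Psi$: one has $|C_\Psi^\theta(t,t)| \le \|\Psi\|_\infty\, C_1^\theta(t,t) \lesssim 1+\ce(t)^2$ by \eqref{eq:C1t}.

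For the second bound, the plan is to apply Malliavin integration by parts exactly once in order to reduce matters to the $\Psi=1$ case already treated in Lemma~\ref{lem:Cor1}. The decisive choice is to integrate by parts against the \emph{increment} $Z := \tilde W_t^\theta - \tilde W_s^\theta = \delta(h_Z)$, whose deterministic kernel is $h_Z(r) = \tilde K^\theta(t,r) - \tilde K^\theta(s,r)$, rather than against the leading factor $\tilde W_t^\theta$. Using $D_r \cI^\theta = \tilde W_r^\theta$ and $D_r \tilde W_t^\theta = \tilde K^\theta(t,r)$, the IBP formula then yields
\[
C_\Psi^\theta(t,t) - C_\Psi^\theta(t,s) = \E[\Psi(\cI^\theta)]\, \bigl(C_1^\theta(t,t) - C_1^\theta(t,s)\bigr) + \mathcal{T},
\]
where $\mathcal{T} := \E[\Psi'(\cI^\theta)\, \tilde W_t^\theta\, B_{ts}]$ with $B_{ts} := \int_0^1 h_Z(r)\, \tilde W_r^\theta\, dr$. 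The leading term is bounded by $\|\Psi\|_\infty$ times the bound \eqref{eq:C1ts} of Lemma~\ref{lem:Cor1}, which is already of the correct form.

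For $\mathcal{T}$, I would use Cauchy--Schwarz, $|\mathcal{T}| \le \|\Psi'\|_\infty\, \|\tilde W_t^\theta\|_2\, \|B_{ts}\|_2$, with $\|\tilde W_t^\theta\|_2 \lesssim 1+\ce(t)$ by \eqref{eq:C1t}. Minkowski's inequality then gives
\[
\|B_{ts}\|_2 \le \int_0^1 |h_Z(r)|\, \|\tilde W_r^\theta\|_2\, dr \lesssim \int_0^1 |h_Z(r)|\, (1+\ce(r))\, dr,
\]
and the crucial estimate is the $L^1$-modulus of the kernel. Writing $\tilde K^\theta = \tilde K + \theta \Delta\tilde K$, a scaling argument based on the finiteness of $\int \bigl||1-u|^{H-1/2} - |u|^{H-1/2}\bigr|\, du$ gives $\int |\tilde K(t,r)-\tilde K(s,r)|\, dr \lesssim (t-s)^{H+1/2}$, while $\int |\Delta\tilde K(t,r)-\Delta\tilde K(s,r)|\, dr \lesssim n^{-H-1/2}$ follows from \eqref{eq:ineq2}. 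The weighted part $\int |h_Z|\,\ce\, dr$ is then controlled by Cauchy--Schwarz using $\|h_Z\|_2 = \|Z\|_2 \lesssim (t-s)^H + n^{-H} + \ce(t) + \ce(s)$ (essentially the content of \eqref{eq:Correl0}--\eqref{eq:CorrelDelta2}) together with $\|\ce\|_2 \lesssim n^{-H}$.

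Putting everything together, the various cross-products generated in $|\mathcal{T}|$, such as $(t-s)^H n^{-H}$, $\ce(t)\,n^{-H-1/2}$ and $\ce(t)(t-s)^{H+1/2}$, are absorbed into the target bound by repeated use of $ab \le \tfrac{1}{2}(a^2+b^2)$, leaving only contributions of the form $(t-s)^{2H} + n^{-2H} + \ce(t)^2 + (n^{-H}+\ce(t))\ce(s)$. The main obstacle, and the reason a direct Cauchy--Schwarz on $\E[\Psi(\cI^\theta)\tilde W_t^\theta Z]$ cannot work, is that such a naive bound contains $\|Z\|_2 \lesssim (t-s)^H$ and misses the target exponent $2H$ by a factor $(t-s)^{1/2}$. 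Integrating by parts against $Z$ (rather than against $\tilde W_t^\theta$) is precisely what upgrades this $L^2$-estimate on $Z$ into the sharper $L^1$-estimate $\|h_Z\|_1 \lesssim (t-s)^{H+1/2}$ on the kernel difference, recovering the missing $(t-s)^{1/2}$.
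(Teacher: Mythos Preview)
Your argument is correct and coincides with the paper's: one Malliavin integration by parts extracts the leading term $\E[\Psi(\cI^\theta)]\bigl(C_1^\theta(t,t)-C_1^\theta(t,s)\bigr)$, and the remainder $\int_0^1 C_{\Psi'}^\theta(t,r)\,h_Z(r)\,dr$ is controlled via the $L^1$ modulus $\int_0^1(1+\ce(r))\,|h_Z(r)|\,dr \lesssim (t-s)^{H+1/2}+h^{H+1/2}+h^{1/2}(\ce(t)+\ce(s))$. The paper integrates by parts against $\tilde W_s^\theta$ rather than against the increment $Z$, but by linearity this produces the identical formula for the difference; the only cosmetic discrepancy is that the paper bounds $\int|h_Z|\,\ce$ by a direct near/far splitting instead of your Cauchy--Schwarz (and note that the reference for $\int|\Delta\tilde K|\lesssim n^{-H-1/2}$ should be \eqref{eq:ineq1} with $\alpha=0$, not \eqref{eq:ineq2}).
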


\begin{proof}
The first inequality is immediate since $\Psi$ is bounded.

For the second one, we write
\begin{align*}
C_{\Psi}^{\theta}(s,t)& =\E\left[\Psi(\cI^\theta)  \widetilde{W}^{\theta}_s \widetilde{W}^{\theta}_t\right] \\
&= \E \left[ \int_0^1 D_r \left( \Psi(\cI^{\theta}) \widetilde{W}_t \right) \widetilde{K}^{\theta}(s,r) dr \right] \\
&=  \E [\Psi(\cI^{\theta}) ] \left( \int_0^1 \widetilde{K}^{\theta}(t,r) \widetilde{K}^{\theta}(s,r) dr \right) \\
& + \int_0^1 C_{\Phi'}^{\theta}(t,r) \widetilde{K}^{\theta}(s,r) dr.
\end{align*}

The first part is equal to $\E[\Psi(\cI^{\theta})] C_1(s,t)$, and we can therefore apply the results of the previous lemma.

For the second part, we note that by Lemma \ref{lem:Cor0}, it holds that
\[ C_{\Psi'}^\theta(t,r) \lesssim (1+\ce(t)) (1+\ce(r)), \]
so that we need to bound
\[ \int_0^1 (1+\ce(r)) \left|\widetilde{K}^{\theta}(t,r)-\widetilde{K}^{\theta}(s,r) \right| dr. \]
Splitting as in the previous proofs depending whether $|t-r|$, $|s-r|$ is $\geq h$ or $\leq h$, we obtain a bound of order 
\[ |t-s|^{H+1/2} +h^{H+1/2} + h^{1/2} \cE(s) + h^{1/2} \cE(t)\]
(which is negligible before the first term). 
\end{proof}

\subsection{Proof of Lemma \ref{lem:Phi}} \label{subsec:42}

We first record some properties of the covariance function of Riemann-Liouville fBm. 

\begin{proposition}
Let $C(s,t) = \E[\widehat{W}_t \widehat{W}_s ]$. 
Then it holds that 
\[C(t,t) = C t^{2H} \]
and for $0 \leq s\leq t \leq 1$,
\begin{equation} \label{eq:C1}
\left| C(t,t) - C(s,t) \right| \lesssim (t-s)^{2H}, \;\;\;\; \left| C(t,s) - C(s,s)  \right| \lesssim (t-s)^{2H} 
\end{equation}
\begin{equation} \label{eq:C2}
 \left| \partial_t C(s,t) \right| \lesssim (t-s)^{2H-1} , 
\end{equation}
\begin{equation} \label{eq:C3}
\left| \partial_s C(s,t) \right| \lesssim (t-s)^{2H-1} +  s^{2H-1}.
\end{equation}
\end{proposition}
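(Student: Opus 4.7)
The plan is to use the Itô isometry representation
\[ C(s,t) = \int_0^{s\wedge t} (t-r)^{H-1/2} (s-r)^{H-1/2}\, dr \]
and reduce every bound to a universal integral by rescaling $r$ against $(t-s)$. The identity $C(t,t)= t^{2H}/(2H)$ follows from setting $s=t$ and integrating $(t-r)^{2H-1}$.

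For the Hölder estimates \eqref{eq:C1}, I would write, for $0\le s\le t$,
\[ C(t,t)-C(s,t) = \int_s^t (t-r)^{2H-1} dr + \int_0^s (t-r)^{H-1/2}\bigl[(t-r)^{H-1/2}-(s-r)^{H-1/2}\bigr] dr, \]
the first piece being exactly $(t-s)^{2H}/(2H)$. For the second, perform the substitution $r = s-(t-s)u$, which turns the integral into
\[ (t-s)^{2H} \int_0^{s/(t-s)} (1+u)^{H-1/2}\bigl[(1+u)^{H-1/2}-u^{H-1/2}\bigr] du. \]
Since $H<1/2$ the integrand behaves like $u^{2H-2}$ at infinity and is integrable near $0$, so the integral from $0$ to $\infty$ converges to a finite constant, yielding the desired $(t-s)^{2H}$ bound. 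The symmetric estimate $|C(t,s)-C(s,s)|\lesssim (t-s)^{2H}$ is obtained by the same substitution applied to $\int_0^s[(t-r)^{H-1/2}-(s-r)^{H-1/2}](s-r)^{H-1/2} dr$.

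For $\partial_t C(s,t)$, differentiating under the integral is legitimate because the integrand at the fixed endpoint $r=s$ remains integrable; one obtains
\[ \partial_t C(s,t) = (H-\tfrac 1 2)\int_0^s (t-r)^{H-3/2}(s-r)^{H-1/2}dr, \]
and the same rescaling $r=s-(t-s)u$ factors out $(t-s)^{2H-1}$ times a finite integral, giving \eqref{eq:C2}.

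The main obstacle is \eqref{eq:C3}: differentiating naively in $s$ produces a divergent boundary term because $(s-r)^{H-1/2}$ blows up as $r\uparrow s$. To avoid this, I would first substitute $u=s-r$ to rewrite
\[ C(s,t) = \int_0^s (t-s+u)^{H-1/2} u^{H-1/2}\, du, \]
where now the $s$-dependence appears in the upper limit and inside the regular factor $(t-s+u)^{H-1/2}$. Differentiating cleanly gives
\[ \partial_s C(s,t) = t^{H-1/2} s^{H-1/2} - (H-\tfrac 1 2)\int_0^s (t-s+u)^{H-3/2} u^{H-1/2}\, du. \]
Using $t\ge s$ and $H-1/2<0$, the boundary term is bounded by $s^{2H-1}$, while the remaining integral is handled by the scaling $u=(t-s)v$ as before, yielding a contribution of order $(t-s)^{2H-1}$. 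Combining the two estimates gives \eqref{eq:C3}.
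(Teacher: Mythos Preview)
Your proof is correct, and for \eqref{eq:C1} and \eqref{eq:C2} it is essentially the same as the paper's: both rely on rescaling $r$ against $t-s$ to pull out the factor $(t-s)^{2H}$ (resp.\ $(t-s)^{2H-1}$) in front of a convergent dimensionless integral.

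The treatment of $\partial_s C$ differs. The paper rescales by $s$, writing $C(s,t)=s^{2H}\int_0^1 (t/s-u)^{H-1/2}(1-u)^{H-1/2}\,du$, differentiates, and is then forced into a case distinction according to whether $t/s\le 2$ or $t/s\ge 2$ to control the resulting integral $\int_0^1(t/s-u)^{H-3/2}(1-u)^{H-1/2}\,du$. Your substitution $u=s-r$ \emph{before} differentiating is cleaner: it produces directly the boundary term $t^{H-1/2}s^{H-1/2}\le s^{2H-1}$ and an integral which, after the same $(t-s)$-rescaling as in \eqref{eq:C2}, is bounded by $(t-s)^{2H-1}$ times the convergent integral $\int_0^\infty (1+v)^{H-3/2}v^{H-1/2}\,dv$. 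No case split is needed, and the two contributions $s^{2H-1}$ and $(t-s)^{2H-1}$ to the bound \eqref{eq:C3} appear transparently from the boundary and bulk terms respectively.
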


\begin{proof} It holds that
\begin{equation*}
C(t,t) - C(s,t) = \int_0^t (t-r)^{H-1/2} \left((t-r)^{H-1/2} - (s-r)_+^{H-1/2} \right)dr 
\end{equation*}
Using the change of variables $r = t-(t-s)u$, this leads to
\begin{align*}
C(t,t) - C(s,t) = (t-s)^{2H} \left( \int_{0}^{\frac{t}{t-s}}   u^{H-1/2}  \left(u^{H-1/2} - (1-u)_+^{H-1/2} \right) du\right) \lesssim (t-s)^{2H-1}.
\end{align*}

We also have
\begin{align*}
 \partial_t C(s,t) & = (H-1/2) \int_0^s (s-r)^{H-1/2} (t-r)^{H-3/2} dt \\
 &= (H-1/2) (t-s)^{2H-1} \int_0^{s/(t-s)} u^{H-1/2} (1+u)^{H-3/2}du = O((t-s)^{2H-1}).
 \end{align*}
The bound on $C(t,s) - C(s,s)$ follows.

Finally, writing
\[ C(s,t) = s^{2H} \int_0^1 (t/s-u)^{H-1/2} (1-u)^{H-1/2}  du\]
we obtain
\[ \partial_s C = s^{2H-1} O(1)  + (1/2-H)  s^{2H-2} t  \int_0^1 (t/s-u)^{H-3/2} (1-u)^{H-1/2}du .\]
we then distinguish two cases. First, if $t/s \leq 2$, the integral is $O(t/s-1)^{2H-1}$ and we bound the second term by a multiple of
\[ 
s^{2H-1} (t/s-1)^{2H-1} = (t-s)^{2H-1}. \]
In the case when $t/s \geq 2$, the integral is now $O((t/s)^{H-3/2})$ and this leads to an overal bound of
\[ s^{2H-2} t (t/s)^{H-3/2} = t^{H-1/2} s^{H-1/2} \leq s^{2H-1}. \]
\end{proof}

\begin{lemma} \label{lem:Z12}
Let $\psi:\R^2 \to \R$ be such that $\partial_1 \psi, \partial^2_1 \psi$ and $\partial_2 \psi$ are bounded. Consider
\[
\varphi :(\alpha, \beta, \gamma) \in \R \times \R_+ \times \R \mapsto \E \left[ \psi(\alpha Z_1 + \sqrt{\beta} Z_2, \gamma Z_1)\right]
\]
where $Z_1$, $Z_2$ are independent $\mathcal{N}(0,1)$. Then $\varphi$ is globally Lipschitz.
\end{lemma}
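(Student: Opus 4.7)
The plan is to verify that $\varphi$ has bounded partial derivatives on $\R \times (0,\infty) \times \R$, with bounds independent of $(\alpha,\beta,\gamma)$, and to extend to $\beta=0$ by continuity. The derivatives in $\alpha$ and $\gamma$ are easy: differentiation under the expectation gives
\[ \partial_\alpha \varphi = \E\bigl[\partial_1 \psi(\alpha Z_1 + \sqrt{\beta} Z_2, \gamma Z_1)\, Z_1\bigr], \quad \partial_\gamma \varphi = \E\bigl[\partial_2 \psi(\alpha Z_1 + \sqrt{\beta} Z_2, \gamma Z_1)\, Z_1\bigr], \]
both bounded by $\max(\|\partial_1\psi\|_\infty, \|\partial_2\psi\|_\infty)\, \E|Z_1|$.

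The main obstacle is $\partial_\beta \varphi$, since naive differentiation produces a factor $\tfrac{1}{2\sqrt{\beta}}$, which is singular at $\beta=0$. The key step is to apply Gaussian integration by parts in the variable $Z_2$ (conditionally on $Z_1$). Writing $g(z) = \partial_1\psi(\alpha Z_1 + \sqrt{\beta} z, \gamma Z_1)$ one has $g'(z) = \sqrt{\beta}\, \partial_1^2\psi(\alpha Z_1 + \sqrt{\beta} z, \gamma Z_1)$, so the identity $\E[g(Z_2)Z_2\mid Z_1] = \E[g'(Z_2)\mid Z_1]$ yields
\[ \E\bigl[\partial_1 \psi(\alpha Z_1 + \sqrt{\beta} Z_2, \gamma Z_1)\, Z_2\bigr] = \sqrt{\beta}\, \E\bigl[\partial_1^2 \psi(\alpha Z_1 + \sqrt{\beta} Z_2, \gamma Z_1)\bigr]. \]
Substituting into the formal expression for $\partial_\beta \varphi$ cancels the $1/\sqrt{\beta}$ and leaves
\[ \partial_\beta \varphi = \tfrac{1}{2}\, \E\bigl[\partial_1^2 \psi(\alpha Z_1 + \sqrt{\beta} Z_2, \gamma Z_1)\bigr], \]
which is bounded by $\tfrac{1}{2}\|\partial_1^2\psi\|_\infty$, uniformly in $(\alpha,\beta,\gamma)$.

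With all three partial derivatives uniformly bounded on $\R \times (0,\infty) \times \R$, the mean value theorem gives Lipschitz continuity on this open set with an explicit constant depending only on the supremum norms of $\psi,\partial_1\psi,\partial_1^2\psi,\partial_2\psi$. Finally, dominated convergence shows that $\varphi$ extends continuously to $\beta=0$, so the Lipschitz estimate propagates to all of $\R\times\R_+\times\R$. The only subtlety, the $\sqrt{\beta}$-singularity, is precisely what justifies the hypothesis that the second $\partial_1$-derivative of $\psi$ be bounded.
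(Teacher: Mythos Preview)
Your proof is correct and follows essentially the same approach as the paper. The only cosmetic difference is in handling $\partial_\beta\varphi$: you invoke Stein's identity $\E[g(Z_2)Z_2]=\E[g'(Z_2)]$ directly to obtain $\partial_\beta\varphi=\tfrac12\E[\partial_1^2\psi(\cdots)]$, whereas the paper subtracts the zero term $\E[Z_2\,\partial_1\psi(\alpha Z_1,\gamma Z_1)]$ and bounds the remainder using the Lipschitz constant of $\partial_1\psi$ in its first argument---both arguments exploit the boundedness of $\partial_1^2\psi$ in the same way to cancel the $1/\sqrt{\beta}$ singularity.
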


\begin{proof}
Since $\psi$ is Lipschitz, Lipschitz dependence of $\varphi$ in $\alpha$ and $\gamma$ is clear. We then write
\[ \partial_\beta \varphi = \frac{1}{2 \sqrt{\beta}} \E \left[ Z_2 (\partial_1 \psi)(\alpha Z_1 + \sqrt{\beta} Z_2, \gamma Z_1)\right] \]
and note that, since $ \E \left[ Z_2 (\partial_1 \psi)(\alpha Z_1, \gamma Z_1)\right]=0$, the expectation on the r.h.s. is bounded by a multiple of $\sqrt{\beta}$.
\end{proof}

\begin{proof}[Proof of Lemma \ref{lem:Phi}.]

We can use the representation
\[ (\widehat{W_s}, \widehat{W}_t) = \left( \alpha(s,t) Z_1 +  \sqrt{\beta(s,t)} Z_2, \gamma(t) Z_1 \right) \]
where $Z_1$, $Z_2$ are independent $\cN(0,1)$, and
\[ \alpha(s,t)^2 = \frac{C(t,s)^2}{C(t,t)}, \;\;\; \beta(s,t) = C(s,s) - \frac{C(t,s)^2}{C(t,t)}, \;\;\; \gamma(t) = C(t,t)^{1/2}. \]

By Lemma \ref{lem:Z12}, this yields
\begin{align}
\phi(t,t)-\phi(t,s)& \lesssim | \alpha(t,t)- \alpha(t,s) | + | \beta(s,t) - \beta(t,t)| \nonumber \\
&= \sqrt{C(t,t)}- C(t,s)/\sqrt{C(t,t)} + C(s,s) - C(t,s)^2/C(t,t) \nonumber\\
&\lesssim (t-s)^{2H} t^{-H} +  \frac{C(s,s) C(t,t) - C(t,s)^2}{C(t,t)}\nonumber \\
&\lesssim (t-s)^{2H} (t^{-H}+1), \label{eq:phitt}
\end{align}
where we have used \eqref{eq:C1}. A similar computation yields
\begin{equation} \label{eq:phiss}
\phi(s,s)-\phi(t,s) \lesssim (t-s)^{2H} (t^{-H}+1),
\end{equation}

Further using the properties of $C$, it holds that
\[ \partial_s \alpha \lesssim (t-s)^{2H-1} t^{-H} +  s^{2H-1} t^{-H}, \]
\[  \partial_s \beta(s,t) \lesssim  s^{2H-1} + (t-s)^{2H-1} \]
and Lemma \ref{lem:Z12}, we obtain
 \[ 
\partial_s \phi(s,t) \lesssim t^{-H} \left( s^{2H-1} + (t-s)^{2H-1} \right).
\]
Similarly, using that 
\[ \partial_t \alpha \lesssim (t-s)^{2H-1} t^{-H} + t^{H-1} \lesssim (t-s)^{2H-1} t^{-H} \]
\[ \partial_t \beta \lesssim (t-s)^{2H-1} + t^{2H-1} \lesssim (t-s)^{2H-1}, \] 
\[ \partial_t \gamma \lesssim t^{H-1}\lesssim (t-s)^{2H-1} t^{-H} \]
it holds that 
\[ \partial_t \phi \lesssim (t-s)^{2H-1} t^{-H}. \]

\end{proof}

\subsection{Proof of Lemma \ref{lem:PhiHybrid}} \label{subsec:43}

Let $C(s,t) = \E[\widehat{W}_t \widehat{W}_s ]$ and  $\widecheck{C}(s,t) = \E[\widecheck{W}_t \widecheck{W}_s ]$, for grid-points $t$ and $s$ satisfying $t-s \geq \kappa h$.

By Lemma \ref{lem:Z12}, it holds that
\[ \left| \phi(t,s) - \check{\phi}(t,s) \right| \lesssim \left| \alpha(t,s) - \check{\alpha}(t,s) \right| +  \left| \beta(t,s) - \check{\beta}(t,s) \right|, \]
where
\[ \alpha(t,s)= \frac{C(t,s)}{\sqrt{C(t,t)}} , \;\;\;\beta(t,s) = C(s,s) -  \frac{C(t,s)^2}{C(t,t)} \]
and $\check{\alpha}$, $\check{\beta}$ are defined similarly with $\widecheck{C}$ instead of $C$. Note that by assumption, it holds that for all grid-points $t$, $\widecheck{C}(t,t) = C(t,t) = C t^{2H}$. It follows that
\[ \left| \phi(t,s) - \check{\phi}(t,s) \right| \lesssim t^{-H} \left| C(s,t) - \widecheck{C}(s,t)\right|. \]

Now recall that for $t\geq r$,  $\widecheck{K}(t,r) = K(t, \check{r})$ for some $\check{r}$ (depending on $t$ and $r$), with $|r-\check{r}| \leq h$ and similarly if $s\geq r$, $\widecheck{K}(s,r) = K(s, \check{r}')$,  with $|r-\check{r}'| \leq h$, and $\check{r}'=r$ if $s-r \leq \kappa h$.

This yields
\begin{align*}
C(s,t) - \widecheck{C}(s,t)& = \int_0^s (t-r)^{H-1/2} (s-r)^{H-1/2} dr -  \int_0^s (t-\check{r})^{H-1/2} (s-\check{r}')^{H-1/2} dr \\
&\lesssim n^{-1} \int_0^s (t-r)^{H-3/2} (s-r)^{H-1/2} dr  + n^{-1} \int_0^{s- \kappa h} (t-r)^{H-1/2} (s-r)^{H-3/2} dr  \\
&\lesssim n^{-1} (t-s)^{2H-1} + n^{-H-1/2} (t-s)^{H-1/2}
\end{align*}
and the result follows.

\bibliographystyle{alpha}
\bibliography{roughvol}

\end{document}